\keywords{non-elementary complexity, safe $\lambda$-calculus}
\newcommand{\Str}{\mathtt{Str}}
\newcommand{\Bool}{\mathtt{Bool}}
\newcommand{\Nat}{\mathtt{Nat}}
\newcommand{\naturals}{\mathbb{N}}
\newcommand{\rmord}{\mathrm{ord}}
\newcommand{\rmheight}{\mathrm{height}}
\newcommand{\inford}{\mathrm{inford}}
\newcommand{\tower}{\mathrm{tower}}
\newcommand{\tttow}{\mathtt{tow}}
\newcommand{\tttrue}{\mathtt{true}}
\newcommand{\ttfalse}{\mathtt{false}}
\newcommand{\ttenum}{\mathtt{enum}}
\newcommand{\ttany}{\mathtt{any}}
\newcommand{\ttconc}{\mathtt{cat}}
\newcommand{\ttsplit}{\mathtt{split}}
\newcommand{\ttand}{\mathtt{and}}
\newcommand{\ttor}{\mathtt{or}}
\newcommand{\ttnot}{\mathtt{not}}
\newcommand{\emptycode}{{\overline\varepsilon}}
\newcommand\Elementary{\textsc{Elementary}\xspace}
\newcommand\ExpTime{\textsc{ExpTime}\xspace}
\newcommand\FElementary{\textsc{FElementary}\xspace}
\newcommand\FExpTime{\textsc{FExpTime}\xspace}
\newcommand\PSpace{\textsc{PSpace}\xspace}
\newcommand\Tower{\textsc{Tower}\xspace}
\newcommand\hls{\vdash_{\mathsf{hls}}}
\newcommand\denot[1]{\llbracket #1 \rrbracket}
\newcommand\pared[2][]{#2^{\star #1}}
\begin{document}
\title[Simply typed convertibility is TOWER-complete even for safe $\lambda$-terms]{Simply typed convertibility is TOWER-complete\texorpdfstring{\\}{} even for safe lambda-terms}
%\titlecomment{{\lsuper*}OPTIONAL comment concerning the title, \eg,
%  if a variant or an extended abstract of the paper has appeared elsewhere.}
\thanks{Thanks to Anupam Das, Damiano Mazza and Noam Zeilberger for many
  instructive discussions on the topic of complexity of normalization for $\lambda$-calculi.}	%optional

% affiliations are numbered automatically with a, b, c (see below)
% use the optional argument to indicate the affiliation(s) of each author
% omit the argument if there is only one author, or only one affiliation
\thanks{The author was supported by the LambdaComb project
  (ANR-21-CE48-0017) while working at École Polytechnique and by the LABEX MILYON
  (ANR-10-LABX-0070) of Université de Lyon, within the program
  \enquote{France 2030} operated by the French National Research
  Agency (ANR)}

\author[L.~T.~D.~{Nguy\~{\^e}n}]{{Lê Thành D\~ung (Tito) Nguy\~{\^e}n}\lmcsorcid{0000-0002-6900-5577}}

\address{Laboratoire de l'informatique du parallélisme (LIP), École normale supérieure de Lyon, France}
\email{nltd@nguyentito.eu}

%% etc.

%% required for running head on odd and even pages, use suitable
%% abbreviations in case of long titles and many authors:

%%%%%%%%%%%%%%%%%%%%%%%%%%%%%%%%%%%%%%%%%%%%%%%%%%%%%%%%%%%%%%%%%%%%%%%%%%%

%% the abstract has to PRECEDE the command \maketitle:
%% be sure not to issue the \maketitle command twice!

\begin{abstract}
  \noindent We consider the following decision problem: given two simply typed
  $\lambda$-terms, are they $\beta$-convertible? Equivalently, do they have the
  same normal form? It is famously non-elementary, but the precise complexity --
  namely \Tower-complete -- is lesser known. One goal of this short paper is
  to popularize this fact.

  Our original contribution is to show that the problem stays
  \Tower-complete when the two input terms belong to Blum and Ong's safe
  $\lambda$-calculus, a fragment of the simply typed $\lambda$-calculus arising
  from the study of higher-order recursion schemes. Previously, the best known
  lower bound for this safe $\beta$-convertibility problem was
  \PSpace-hardness. Our proof proceeds by reduction from the star-free
  expression equivalence problem, taking inspiration from the author's work with
  Pradic on \enquote{implicit automata in typed $\lambda$-calculi}.

  These results also hold for $\beta\eta$-convertibility.
\end{abstract}

\maketitle

\section{Introduction}

Consider the following \emph{simply typed $\beta$-convertibility} problem:
\begin{itemize}
\item \emph{Input:} two simply typed $\lambda$-terms $t$ and $u$ of the same type -- this
  involves some choices of representation for such terms, that will be
  discussed later (\S\ref{sec:pedantic}).
\item \emph{Output:} does $t =_\beta u$ hold? Equivalently -- since
  $\beta$-reduction is confluent and terminating in the simply typed
  $\lambda$-calculus -- do $t$ and $u$ have the same normal form?
\end{itemize}
This is, of course, a fundamental decision problem concerning the
$\lambda$-calculus. For example, the special case when $t : \mathtt{Bool} = o
\to o \to o$ and $u = \mathtt{true} = \lambda x^o.\,
\lambda y^o.\, x$ amounts to \emph{evaluating} the result of the program $t$. (Here, $o$ is the single base type that we work with: our grammar of simple types is $A,B \mathrel{::=} o \mid A \to B$.)
Also, proof assistants based on dependent types often need to check whether two
terms are \emph{definitionally equal} -- which is nothing more than the
generalization of the above problem to more sophisticated type theories.
For these reasons, there have been many works on deciding convertibility
efficiently, such as~\cite{Condoluci}.

To find out whether $t =_{\beta} u$, a naive approach is to normalize both $t$ and
$u$ and then compare their normal forms. However, it is not always the best
course of action: see, for instance, the practical remarks
of~\cite[Section~4]{Asperti17a} or the use of denotational semantics to decide
convertibility in~\cite{Terui}.

\subsection{Complexity of \texorpdfstring{$\beta$}{beta}-convertibility}

A well-known result of Statman~\cite{Statman} is that simply typed
$\beta$-convertibility is \emph{not} in the complexity class \Elementary, whose
definition we recall now. Let $2_0^n = n$ and $2_{k+1}^n = 2^{2_k^n}$. A
function is in the class $k$-\FExpTime when it can be computed in time bounded
by $2_k^{P(\text{input size})}$ for some polynomial $P$; and the class
\FElementary of \emph{elementary recursive functions} is the union of all the
$k$-\FExpTime classes for $k\in\naturals$. By \Elementary, we refer to the class of
decision problems that, when seen as functions returning a boolean, belong to
\FElementary.

The statement of Statman's theorem can be strengthened as follows. Let
$\tower(n) = 2_n^1$. The complexity class \Tower consists of the decision
problems that can be solved in time $\tower(2_k^{P(\text{input size})})$ for
some polynomial $P$ and some $k\in\naturals$. A problem is \Tower-hard when every
problem in \Tower reduces to it via many-one \FElementary reductions. By the
time hierarchy theorem, a \Tower-hard problem cannot be in \Elementary.
\begin{thm}\label{thm:stlc-intro}
  Simply typed $\beta$-convertibility is \Tower-complete (that is, it belongs to
  the class \Tower, and is at the same time \Tower-hard).
\end{thm}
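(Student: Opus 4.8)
The plan is to establish the two halves of the statement — membership in \Tower, and \Tower-hardness — by, respectively, the classical quantitative analysis of $\beta$-reduction and the Turing-machine simulation underlying Statman's non-elementariness result~\cite{Statman}, each carried out carefully enough that the resource bounds have exactly the shape demanded by the definition of \Tower. For membership, the obvious algorithm normalizes $t$ and $u$ and compares the results. If $t$ has size $n$, every type occurring in $t$ — hence in every reduct of $t$, since $\beta$-reduction in the simply typed $\lambda$-calculus introduces no new type — has order at most some $d$ bounded by the input size $n$. Standard bounds on the length of $\beta$-reduction sequences, together with an elementary bound on the size of the intermediate terms, show that a normal form can be produced in time at most $2_{O(d)}^{O(n)} \le 2_{O(n)}^{O(n)} \le \tower(O(n))$, which is of the form $\tower(2_0^{P(n)})$ for a polynomial $P$; so the problem lies in \Tower. (Were $d$ bounded by a constant this would only give membership in \FElementary; it is the possibility that the order grows with the input that both forces and suffices for \Tower. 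The same argument handles $\beta\eta$-convertibility via $\eta$-long $\beta$-normal forms, which are at most polynomially larger.)

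For \Tower-hardness, let $L \in \Tower$ be decided by a deterministic Turing machine $M$ within time $B(n) = \tower(2_k^{P(n)})$ on inputs of length $n$, for a fixed $k$ and polynomial $P$; assume w.l.o.g.\ that $M$ loops on its final configuration once it halts. Given $w$ with $|w| = n$, I would build in polynomial time two closed terms $t_w, u_w : \Bool$ with $t_w =_\beta u_w$ iff $w \in L$. The first ingredient is a term for the time bound: using the iterated higher-order Church numerals at the heart of Statman's construction — uniform in the choice of base type — one obtains, for every type $A$, a closed term $\mathtt{bound}^A_w$ of size polynomial in $n$ and of order $O(k)$ (a constant) which, viewed as a Church numeral over base $A$, represents $B(n)$. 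Concretely: write down a polynomial-size Church numeral for $P(n)$; iterate a fixed-size exponentiation gadget $O(k)$ times to get a numeral for $2_k^{P(n)}$ one type level up; then use that numeral to iterate the map $x \mapsto 2^x$ from $1$, obtaining $\tower(2_k^{P(n)}) = B(n)$.

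The second ingredient is the standard encoding of $M$ in the simply typed $\lambda$-calculus: fixing a type $\mathit{Config}$ for configurations (\eg\ a state together with two Church-encoded words for the two halves of the tape), one has closed terms $\mathtt{init}_w : \mathit{Config}$ for the initial configuration on $w$ (size $O(n)$), $\mathtt{step} : \mathit{Config} \to \mathit{Config}$ for one transition (size $O(|M|)$, constant order) and $\mathtt{acc} : \mathit{Config} \to \Bool$ testing acceptance. Taking $A = \mathit{Config}$ above and putting
\[ t_w \;=\; \mathtt{acc}\,\bigl(\mathtt{bound}^{\mathit{Config}}_w\,\mathtt{step}\,\mathtt{init}_w\bigr), \qquad u_w \;=\; \tttrue, \]
the subterm $\mathtt{bound}^{\mathit{Config}}_w\,\mathtt{step}\,\mathtt{init}_w$ $\beta$-reduces to the $B(n)$-th iterate of $\mathtt{step}$ on $\mathtt{init}_w$, i.e.\ to the final configuration of $M$ on $w$; hence $t_w =_\beta u_w$ iff $M$ accepts $w$, i.e.\ iff $w \in L$. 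Since the only closed $\beta\eta$-normal terms of type $\Bool$ are $\tttrue$ and $\ttfalse$, the same map $w \mapsto (t_w, u_w)$ also reduces $L$ to $\beta\eta$-convertibility, and it is computable in polynomial — a fortiori \FElementary — time; as $L$ was arbitrary, \Tower-hardness follows.

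The step I expect to be the main obstacle is the first ingredient of the lower bound: arranging for the iterated-exponential term to typecheck in the simply typed $\lambda$-calculus while keeping (i) its order constant — so that it may legitimately be instantiated at the configuration type — and (ii) its size polynomial in $n$ — so that the reduction runs in polynomial time. This is precisely Statman's insight, and it has to be executed with care; everything else (the machine encoding, the type bookkeeping, the elementariness of the reduction, and the normalization bounds for the upper bound) is routine, the only recurring subtlety being to keep every quantity of the exact shape demanded by \Tower rather than the coarser \FElementary.
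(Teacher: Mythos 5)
Your membership argument is fine and is essentially the paper's (Section~\ref{sec:stlc} uses redex degrees and parallel reduction, and \S\ref{sec:stlc-order} notes the order-based route via known reduction-length bounds that you invoke). The problem is the hardness half. The first ingredient you yourself flag as the main obstacle is not merely delicate: a term $\mathtt{bound}^{A}_w$ of size polynomial in $n$ and of \emph{constant} order that $\beta$-normalizes to the Church numeral for $B(n)=\tower(2_k^{P(n)})$ provably does not exist. By the very normalization bounds you use for the upper bound (or by Terui's result quoted in \S\ref{sec:stlc-order}), a term of polynomial size whose types have bounded order has a $\beta$-normal form of elementary size, whereas $\overline{B(n)} : \Nat[\mathit{Config}]$ has size about $B(n)$, which is non-elementary in $n$. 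The concrete step that fails is \enquote{use that numeral to iterate the map $x\mapsto 2^x$}: exponentiation is not typable as $\Nat[A]\to\Nat[A]$ at a single instance (it lives at $\Nat[A\to A]\to\Nat[A]$), so each exponentiation consumes one type level; iterating it $2_k^{P(n)}$ times needs either that many written-out numerals (as in the paper's $\tttow_m$, hence non-polynomial size) or order growing with the number of iterations -- and even with order growing like the term size, a polynomial-size term can only reach values around $\tower(\mathrm{poly}(n))$, far short of $\tower(2_k^{P(n)})$ for $k\geq 1$. (Under \FElementary{} reductions one may of course write out $\overline{2}\cdots\overline{2}$ with $2_k^{P(n)}$ copies, but that is a different construction from the one you describe, and it abandons both claims (i) and (ii).)

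The second ingredient, which you call routine, is in fact the other known obstruction. To iterate $\mathtt{step}$ with a Church numeral you need $\mathtt{step}:\mathit{Config}\to\mathit{Config}$ at one fixed type, but a transition on Church-encoded tape halves must inspect and remove the symbol under the head, i.e.\ perform a tail/predecessor-like operation, and the simply typed $\lambda$-calculus has no type-preserving predecessor (at a flat instance the definable functions $\Nat\to\Nat$ are just the extended polynomials; the usual pairing trick shifts the type instance and hence cannot be iterated by a numeral). This is precisely why no published \Tower-hardness proof proceeds by clocked Turing-machine simulation: Statman and Mairson reduce from HOSAT-style logic problems, and this paper reduces from star-free expression emptiness (Lemmas~\ref{lem:sfempty-hard}--\ref{lem:enum}), where the non-elementary growth is obtained by letting the order of the types climb with the structure of the formula or expression rather than by iterating a step function under a numeral clock. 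To repair your proof you would need to replace the whole lower-bound construction by a reduction of this kind (or cite Statman/Mairson/Schmitz, as the paper does).
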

The hardness part is actually implicit in Statman's proof~\cite{Statman}: it
provides a reduction from a problem which is essentially the same as the
\Tower-complete Higher-Order Quantified Boolean Satisfiability (HOSAT) problem.
For a more detailed discussion of this point, see the \enquote{related work}
paragraph of~\cite[Section~3]{HOSAT}; let us also mention that
Mairson~\cite{MairsonSimple} has given a simpler reduction from HOSAT to simply typed
$\beta$-convertibility. In fact, in his relatively recent
article~\cite{Schmitz} where the complexity class \Tower is explictly introduced
for the first time, Schmitz observes that many non-elementary lower bounds in
the literature can be read as proofs of \Tower-hardness -- and he mentions the
example of simply typed $\beta$-convertibility~\cite[\S3.1.2]{Schmitz}.

As for membership in \Tower, it is known to be the \enquote{easy part} of Theorem~\ref{thm:stlc-intro}. In Section~\ref{sec:stlc}, we recall some folklore ideas about normalization in the simply typed $\lambda$-calculus, and observe that they yield a \Tower algorithm for $\beta$-convertibility. Section~\ref{sec:stlc} also discusses how the complexity of normalizing a $\lambda$-term is controlled by the \emph{degree} and the \emph{order} of the types that it contains -- defined respectively as $\deg(o) = \rmord(o) = 0$ and
\[  \deg(A\to B) = \max(\deg(A),\deg(B))+1 \qquad
  \rmord(A\to B) = \max(\rmord(A)+1,\, \rmord(B)) \]
In other words, the degree of a type is the height of its syntax tree, while the order is the nesting depth of function arrows to the left.

\subsection{The case of the safe \texorpdfstring{$\lambda$}{lambda}-calculus}\label{sec:intro-safe}

Coincidentally, the order of a simple type is also central to the definition of
\emph{safe} fragment of the simply typed $\lambda$-calculus, even though the
motivation that led Blum and Ong~\cite{BlumOng,BlumPhD} to introduce this
fragment were unrelated to computational complexity. (We discuss this background
in Section~\ref{sec:hors}.)

Let us recall briefly the definition of the safe $\lambda$-calculus. Let $t$ be a simply typed $\lambda$-term in \enquote{Church style} -- i.e., the variables have type annotations ensuring that the type of each subterm of $t$ is uniquely determined. Then $t$ is \emph{unsafe} if it contains a subterm $u : A$ that
\begin{itemize}
  \item contains some free variable $x : B$ with $\rmord(B) < \rmord(A)$
  \item while being in \enquote{unsafe position}, that is:
  \begin{itemize}
    \item either $u$ equals $t$ itself,
    \item or $u$ is in argument position: $t = C[v\;u]$ for some term $v$ and one-hole context $C[\cdot]$,
    \item or $u$ is applied to some argument ($t = C[u\;v]$) but $u$ itself is not an application.
  \end{itemize}
\end{itemize}
A safe $\lambda$-term is, obviously, a simply typed $\lambda$-term that is not unsafe. For example:
\[ \lambda f^{(o\to o)\to o}.\; f\;(\lambda x^o.\; f\;(\underbrace{\lambda y^o.\; \overset{\underset{\downarrow}{\mathclap{\text{free variable of type}\ o ~~\rightsquigarrow~~ \text{order 0}}}}{x}}_{\mathclap{\text{argument subterm of type}\ (o \to o)~~\rightsquigarrow~~ \text{order 1}}}))\ \text{is unsafe} \qquad\quad \lambda f^{(o\to o)\to o}.\; f\;(\lambda x^o.\; f\;(\lambda y^o.\; y))\ \text{is safe} \]
The precise complexity of deciding $\beta$-convertibility on safe $\lambda$-terms has been an open problem until now. In their paper introducing the safe $\lambda$-calculus, Blum and Ong prove that it is \PSpace-hard~\cite[Section~3]{BlumOng}. This lower bound is far below \Tower, but they argue that both Statman's proof~\cite{Statman} of the hardness part in Theorem~\ref{thm:stlc-intro} and its simplification by Mairson~\cite{MairsonSimple} fundamentally require unsafe terms. To quote~\cite[\S3.1]{BlumOng}, this \enquote{does not rule out the possibility that another non-elementary problem is encodable in the safe lambda calculus}. This is precisely what our original contribution here is about:
\begin{thm}\label{thm:safe-intro}
  $\beta$-convertibility in the safe $\lambda$-calculus is \Tower-complete.
\end{thm}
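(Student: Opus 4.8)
The plan is to prove the two halves of Theorem~\ref{thm:safe-intro} separately. Membership in \Tower is immediate: safe $\lambda$-terms form a subclass of simply typed $\lambda$-terms, so the \Tower algorithm for general simply typed $\beta$-convertibility underlying Theorem~\ref{thm:stlc-intro} applies verbatim. All the work is in the hardness part, and the strategy is a reduction from the \emph{star-free expression equivalence problem}: given two star-free regular expressions over a finite alphabet $\Sigma$, do they denote the same language? This problem is \Tower-complete, and it is a natural target because star-free languages are exactly those definable by a bounded quantifier-alternation depth, which is the resource that fuels the non-elementary blow-up; the hope is that this logical structure can be encoded using only safe higher-order functions.

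First I would fix a Church-style encoding of $\Sigma$-strings as safe $\lambda$-terms — concretely, the type $\Str_\Sigma = (o\to o)^{|\Sigma|} \to o \to o$ with the usual iterator semantics, which is an order-$1$ type and poses no safety issue by itself. The core step is then to show, by induction on a star-free expression $e$, how to build a safe $\lambda$-term $\denot{e}$ such that, on input a Church-encoded string $w$, the term $\denot{e}\;w$ $\beta$-reduces to $\tttrue$ if $w \in L(e)$ and to $\ttfalse$ otherwise. The base cases (singletons, $\varepsilon$, $\emptyset$) and the Boolean combinations (union $\mapsto \ttor$, intersection $\mapsto \ttand$, complement $\mapsto \ttnot$) are routine; the interesting case is concatenation, $L(e_1 \cdot e_2)$, which requires, for each way of splitting the input word $w = w_1 w_2$, running the two sub-recognizers and conjoining the results, then disjoining over all splits. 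Here is where I would borrow the machinery from the author's work with Pradic on implicit automata: a \ttsplit-style combinator that, given $w : \Str_\Sigma$, produces (a representation of) the list of its prefix/suffix decompositions, so that the concatenation recognizer is "for some split, $\denot{e_1}$ accepts the prefix and $\denot{e_2}$ accepts the suffix". The decisive point is that all of this — iteration to enumerate splits, the list/pair plumbing, the Boolean gates — can be arranged so that every subterm in unsafe position has no free variable of order strictly below its own; in practice this means keeping the recognizers \emph{closed} and feeding everything through application rather than through dangling free variables, and checking that the orders line up in the \ttsplit combinator.

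The final reduction outputs, from expressions $e_1, e_2$, the pair of safe terms $\lambda w.\,\denot{e_1}\;w$ and $\lambda w.\,\denot{e_2}\;w$ (of the same type $\Str_\Sigma \to \Bool$), and asks whether they are $\beta$-convertible. By a standard argument — two simply typed terms of type $\Str_\Sigma \to \Bool$ are $\beta$-convertible iff they agree on all Church-encoded strings, since normal forms of closed terms of type $\Str_\Sigma$ are exactly the Church encodings — this holds iff $L(e_1) = L(e_2)$. Two wrinkles to dispatch: the reduction must be \FElementary (it is — the term built from $e$ has size polynomial in $|e|$, in fact roughly $O(|e|)$ once \ttsplit and the Booleans are factored out as fixed closed terms); and one should check the $\beta\eta$ version, for which the "agree on all Church strings" characterization and the safety bookkeeping are unchanged. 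The main obstacle I anticipate is precisely the safety audit of the concatenation case: making the \ttsplit combinator and its consumer safe is not automatic, because the naive encoding threatens to place a subterm of recognizer type (order $\ge 2$) in argument position while it still mentions an order-$1$ or order-$0$ bound string variable — so the real content of the proof is a careful choice of combinators, likely λ-lifting the offending subterms to closed form, so that the order condition in Blum–Ong's definition of unsafety is never triggered.
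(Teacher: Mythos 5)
Your membership argument and your overall strategy (reduce from star-free expressions, translate expressions to safe recognizers by induction, with concatenation/splitting as the crux of the safety audit) match the paper. But the final step of your reduction has a genuine gap: you reduce $\denot{e_1}=\denot{e_2}$ to $\beta$-convertibility of the two recognizers via the claim that two terms of type $\Str_\Sigma\to\Bool$ are convertible iff they agree on all Church-encoded strings. That claim is false. For instance, $\lambda n x y.\,x$ and $\lambda n x y.\,n\,(\lambda z.\,x)\,x$ of type $\Nat\to\Bool$ agree on every closed normal input of type $\Nat$ but are distinct $\beta\eta$-normal forms, hence not convertible. Observational agreement on Church encodings does not imply convertibility, so your reduction is unsound as stated. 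The paper avoids this entirely: it reduces star-free equivalence to \emph{emptiness}, invokes Proposition~\ref{prop:stockmeyer} to bound the length of a shortest witness by $\tower(n)$ (Lemma~\ref{lem:sfempty-hard}), and then builds \emph{inside the safe $\lambda$-calculus} an enumerator $\ttenum_\Sigma$ of all words of length at most $\tower(n)$ together with an existential-test combinator $\ttany_\Sigma$ (Lemmas~\ref{lem:test-all-strings} and~\ref{lem:enum}), so that the final instance is a single closed term $b_E$ of type $\Bool$ compared against $\tttrue$. At type $\Bool$ there are only two normal inhabitants, so convertibility genuinely coincides with the semantic condition. This whole second half of the construction is missing from your proposal and cannot be replaced by the extensionality claim.

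A secondary but substantive point: the recognizers do not all live at the fixed type $\Str_\Sigma\to\Bool$. Already for the $\varepsilon$ and single-letter base cases, and a fortiori for concatenation, the safe translation requires type $\Str_\Sigma[A_E]\to\Bool$ where $A_E$ depends on (and grows with) $E$. Consequently your two candidate terms $\lambda w.\,\denot{e_1}\,w$ and $\lambda w.\,\denot{e_2}\,w$ would in general not share a type, and the Boolean cases are not ``routine'': $t_E$ and $t_F$ consume their inputs at different instantiations $\Str_\Sigma[A_E]$ and $\Str_\Sigma[A_F]$, forcing the paper to fold over the input while producing two differently typed copies of it before dispatching to the sub-recognizers. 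Your diagnosis that the splitting combinator and its safety audit are the hard core is accurate and corresponds to Lemmas~\ref{lem:star-free-lambda} and~\ref{lem:split}, but the reduction needs to be rerouted through bounded emptiness testing to be correct.
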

Membership in \Tower follows, of course, from the \enquote{easy part} of Theorem~\ref{thm:stlc-intro}, which is the topic of Section~\ref{sec:stlc}. We establish \Tower-hardness in Section~\ref{sec:safe} even for the special case of \enquote{homogeneous long-safe} $\lambda$-terms of type $\Bool$ (Theorem~\ref{thm:safe-hard}), by reduction from a standard problem in automata theory: \emph{star-free expression equivalence}. As a corollary, this provides an alternative proof of the \enquote{hard part} of Theorem~\ref{thm:stlc-intro}.

\subsection{Some pedantic points}
\label{sec:pedantic}

Traditionally, the simply typed $\lambda$-calculus (and its safe fragment) can be presented in two ways (cf.~\cite{ChurchvsCurry}). We already mentioned the intrinsically typed \enquote{Church style}. By contrast, in \enquote{Curry style}, a $\lambda$-term is given without type annotations, and may satisfy several different typing judgments (e.g.\ $\vdash \lambda xy.\,x : A \to B \to A$ for all simple types $A$ and $B$). We might therefore consider two extreme choices when specifying the input for the simply typed $\beta$-convertibility problem, that might \textit{a priori} make a difference regarding computational complexity:
\begin{enumerate}
  \item\label{item:church} All variables in the input terms are fully annotated with their types, and these annotations are counted in the size of the input.
  \item\label{item:curry} There are no types in the input at all: we are given two untyped terms $t$ and $u$, with the promise that there exists some simple type $A$ such that $\vdash t : A$ and $\vdash u : A$.
\end{enumerate}
The mathematical equivalence of these two questions already requires some care. It is not the case in general for type theories that type erasure is injective modulo conversion: for example, in the polymorphic $\lambda$-calculus (System~F), there are Church-style $\lambda$-terms of the same type that are not convertible even though the underlying untyped terms are $\beta$-convertible.\footnote{In System F, the latter condition is called \enquote{Strachey equivalence}~\cite[\S2.2]{PlotkinAbadi}, see also~\cite{TraceF}.} However, this does not happen in the simply typed $\lambda$-calculus, because:
\begin{fact}\label{fact:injective-erasure}
  A Church-style simply typed $\lambda$-term \emph{in normal form}\footnote{The normal form assumption is required to exclude counter-examples such as $(\lambda xy.\, y)\, (\lambda z.\, z) : A \to A$ where, even when a specific $A$ is fixed, $x$ can be given any type of the form $B \to B$.} can be reconstructed uniquely from its type and its underlying untyped term.
\end{fact}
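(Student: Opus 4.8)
The plan is to prove the statement by structural induction on the underlying untyped term, exploiting the shape of $\beta$-normal forms. Since subterms of a closed term may have free variables, I would first generalise the claim as follows: for every Curry-style typing judgement $x_1 : A_1, \dots, x_n : A_n \vdash M : A$ with $M$ in $\beta$-normal form, the type annotation carried by each $\lambda$-binder of the corresponding Church-style term is uniquely determined by $M$, the types $A_1, \dots, A_n$, and $A$. Fact~\ref{fact:injective-erasure} is then the case $n = 0$ (where there are moreover no free-variable annotations to worry about).

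Recall that a $\beta$-normal term is either an abstraction $\lambda x.\, M'$ or an application spine $y\, N_1 \cdots N_k$ (with $k \geq 0$) whose head $y$ is a \emph{variable} and whose arguments $N_i$ are again $\beta$-normal. If $M = \lambda x.\, M'$, then typability forces $A$ to be a function type $B \to C$ with $B$, $C$ uniquely determined; the annotation on $x$ must be $B$, and the induction hypothesis applied to the judgement $x_1 : A_1, \dots, x_n : A_n, x : B \vdash M' : C$ fixes the annotations inside $M'$. If $M = y\, N_1 \cdots N_k$, the head $y$ occurs free in $M$, so by the typing hypothesis it is declared in the context and its type $\tau$ is known; for the spine to receive type $A$, the type $\tau$ must decompose --- uniquely, since simple types are finite trees --- as $\tau = B_1 \to \cdots \to B_k \to A$, which pins down the type $B_i$ at which each argument $N_i$ is used, and the induction hypothesis applied to $x_1 : A_1, \dots, x_n : A_n \vdash N_i : B_i$ finishes the case. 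The base case $M = y$ carries no annotation.

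The only delicate point --- and the reason the $\beta$-normality hypothesis is indispensable --- lies in the spine case: what makes the argument types $B_i$ recoverable is that the head of an application is a variable, whose type we can simply look up in the context. Were the head a $\beta$-redex $(\lambda z.\, P)$ instead, nothing outside the term would constrain the type at which its argument is consumed, which is precisely the phenomenon behind the footnote's counter-example $(\lambda xy.\, y)\,(\lambda z.\, z)$. Beyond this observation I do not anticipate any obstacle: once the induction is set up to carry a typing context along, the argument is completely routine.
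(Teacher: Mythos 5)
Your proof is correct. The paper states this as a \emph{fact} without any proof (it is folklore), so there is nothing to compare against; your argument --- generalising to open terms with an explicit context, then inducting on the shape of $\beta$-normal forms (abstraction vs.\ variable-headed application spine) and reading off the argument types by peeling arrows from the head variable's type --- is exactly the standard argument one would expect, and your remark on why normality is indispensable correctly matches the footnote's counter-example.
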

Complexity-wise, (\ref{item:church}) is easier than (\ref{item:curry}): simply erasing all types in the input gives a trivial reduction. Conversely, type inference for the simply typed $\lambda$-calculus can be performed in linear time, using first-order unification (see e.g.~\cite{WandInference}) -- this also works for inferring a common simple type for a pair $(t,u)$ of $\lambda$-terms, by adding an equation to the unification problem. A solution to this unification problem -- of size $O(|t|+|u|)$ -- also gives us the type annotations to add to $t$ and $u$; we can ask a unification algorithm to return a unifier that has linear size using a representation of the syntax trees as directed acyclic graphs (in order to share subtrees), cf.~\cite[Sections~4.8~and~4.9]{TRAAT}. Note that this guarantees that $\rmord(A) \leqslant \deg(A) = O(|t|+|u|)$ for any type $A$ among the computed annotations.

Thus, there is a linear time reduction from (\ref{item:curry}) to the variant (\ref{item:church}') of (\ref{item:church}) where such shared representations are allowed for the types. In turn, (\ref{item:church}') reduces to (\ref{item:church}) in exponential time by unfolding the syntax trees of the types. Since $\text{1-\FExpTime} \subset \text{\FElementary}$, this does not make a difference regarding membership in or hardness for \Tower.

\subsection{The \texorpdfstring{$\eta$}{eta} rule}

Theorems~\ref{thm:stlc-intro} and~\ref{thm:safe-intro} also hold for $\beta\eta$-conversion.

For the hardness part, this is because we prove it for the problem restricted to $\lambda$-terms of type $\Bool$, and $\beta\eta$-convertibility coincides with $\beta$-convertibility at type $\Bool$. (Indeed, for every $t : \Bool$, either $t =_\beta \tttrue$ or $t =_\beta \ttfalse = \lambda x^o.\,\lambda y^o.\,y$, and $\tttrue \neq_{\beta\eta} \ttfalse$.)

For the complexity upper bound, note that Section~\ref{sec:stlc} actually gives an algorithm for computing the $\beta$-normal form $t'$ of a simply typed $\lambda$-term $t$. Once we have $t'$, applying \mbox{$\eta$-reductions} takes time $|t'|^{O(1)}$ until a $\beta\eta$-normal form\footnote{Also known as a $\beta$-normal $\eta$-\emph{short} form. In many other contexts, $\beta$-normal $\eta$-\emph{long} forms are more useful, but here $\eta$-reduction is more convenient than $\eta$-expansion because it gives us a terminating rewriting system.} is reached. And testing whether two terms are \mbox{$\beta\eta$-convertible} can be done by comparing their \mbox{$\beta\eta$-normal} forms for equality\footnote{In this paper, we always consider $\lambda$-terms modulo $\alpha$-renaming, so an equality test that takes concrete syntax trees as input has to take $\alpha$-renaming into account.}, thanks to the confluence of $\beta\eta$-reduction on simply typed $\lambda$-terms of a fixed type -- a fact shown in~\cite{Geuvers} which is a bit more subtle than one could expect. This also means that we can invoke Fact~\ref{fact:injective-erasure} again to conclude that the Church-style and Curry-style variants of the $\beta\eta$-convertibility problem are equivalent.

But this does not work anymore when we add sum types endowed with the
\enquote{strong} $\eta$ rule: in this setting, \enquote{there can be no
  $\lambda$-calculus with sums \textit{à la}
  Curry}~\cite[\S{}I(a)]{gascheIntermediate} because untyped
$\beta\eta$-conversion is inconsistent (that is, it equates all
$\lambda$-terms). In fact, decidability of $\beta\eta$-convertibility for a
simply typed $\lambda$-calculus with sums and the empty type, using typed
conversion rules, has only been proved relatively recently~\cite{gascheEmpty}. As future work, it could be interesting to know whether this problem is still in \Tower.

\section{Simply typed \texorpdfstring{$\beta$}{beta}-convertibility is in Tower}
\label{sec:stlc}

\subsection{A simple \Tower normalization algorithm using the degree}

Let us first define the \emph{type of a redex} $\big(\lambda x^A.\, t\big)\, u$ in a Church-style simply typed $\lambda$-term as $A \to B$ where $B$ is the type of $t$ (so $A\to B$ is also the type of $\lambda x^A.\, t$). The \emph{degree of a redex} is the degree of its type.
The idea of using redex degrees to get a weak normalization proof for the simply typed $\lambda$-calculus goes back at least to Turing, see~\cite[Section~1]{BarenbaumSottile} for an explanation and historical discussion. Here, we use a variation on this idea based on the notion of \emph{parallel reduction}, which comes from the Tait/Martin-Löf proof of confluence for untyped $\beta$-reduction.

\begin{defiC}[{\cite[p.~121]{Takahashi}}]\label{def:pared}
  The parallel reduction $\pared{t}$ of a $\lambda$-term $t$ is, inductively:
  \[ \pared{x} = x \qquad \pared{(\lambda x.\, t)} = \lambda x.\, \pared{t} \qquad
    \pared{(t\,u)} = \begin{cases}
      \pared{(t')}\{x:=\pared{u}\} &\text{when}\ t = \lambda x.\,t'\ \text{for some}\ t'\\
      \pared{t}\,\pared{u} & \text{otherwise}
    \end{cases} 
  \]
  This definition on untyped $\lambda$-terms can be extended to Church-style simply typed $\lambda$-terms in the obvious way.
\end{defiC}
It is immediate that $t \longrightarrow_\beta^* \pared{t}$. Furthermore:
\begin{propC}[{\cite[Appendix]{AspertiLevy}}]
  Let $t$ be a simply typed $\lambda$-term whose redexes have degree at most $d\in\naturals$. Then:
  \begin{itemize}
    \item The redexes of $\pared{t}$ have degree at most $d-1$.
    \item As a consequence, the normal form of $t$ can be reached in $d$ parallel reductions. In other words, let $\pared[0]{t} = t$ and $\pared[(n+1)]{t} = \pared{(\pared[n]{t})}$; then $\pared[d]{t}$ is the normal form of $t$.
  \end{itemize}
\end{propC}
This gives us a simple algorithm to compute the normal form of a simply typed $\lambda$-term. In fact, since parallel reduction does not inspect types, we may also apply it to simply \emph{typable} terms, to solve the Curry-style version of the $\beta$-convertibility problem (cf.~\S\ref{sec:pedantic}).

We consider the following measures on an untyped $\lambda$-term $t$ to get a complexity bound:
\begin{itemize}
  \item $\rmheight(t)$ denotes the height of its syntax tree;
  \item its \emph{size} $|t|$ is the number of nodes of its syntax tree, including the leaves (variables): \[ |x| = 1 \qquad |\lambda x.\, t| = 1+|t| \qquad |t\,u| = 1+|t|+|u|\]
\end{itemize}
\begin{fact}
  For any untyped $\lambda$-term $t$, we have $\rmheight(\pared{t}) \leqslant |t| \leqslant 2^{\rmheight(t)}$.
\end{fact}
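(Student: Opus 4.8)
The plan is to prove the two inequalities separately, each by a routine structural induction on $t$; throughout I use the convention that a single variable has height $1$ (so that the syntax tree of a $\lambda$-term is a binary tree of that height, every node having at most two children).

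For the right-hand inequality it is cleaner to prove the slightly stronger bound $|t| \leqslant 2^{\rmheight(t)} - 1$, which just expresses that a binary tree of height $h$ has at most $1 + 2 + \dots + 2^{h-1} = 2^h - 1$ nodes. The base case is $|x| = 1 = 2^{\rmheight(x)} - 1$; for abstractions, $|\lambda x.\,t| = 1 + |t| \leqslant 2^{\rmheight(t)} \leqslant 2^{\,\rmheight(t)+1} - 1 = 2^{\rmheight(\lambda x.\,t)} - 1$; and for applications, $|t\,u| = 1 + |t| + |u| \leqslant 2^{\rmheight(t)} + 2^{\rmheight(u)} - 1 \leqslant 2^{\,\max(\rmheight(t),\rmheight(u))+1} - 1 = 2^{\rmheight(t\,u)} - 1$. (The strengthening by $1$ is exactly what makes the application step go through.)

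For the left-hand inequality $\rmheight(\pared{t}) \leqslant |t|$, I would follow the three clauses of Definition~\ref{def:pared}. The variable case is $\rmheight(x) = 1 = |x|$, and the abstraction case is immediate from the induction hypothesis since both $\rmheight$ and $|\cdot|$ add one at a $\lambda$-node. In the non-redex application case $t = t_1\,t_2$ with $t_1$ not an abstraction, $\pared{t} = \pared{t_1}\,\pared{t_2}$, so $\rmheight(\pared{t}) = 1 + \max(\rmheight(\pared{t_1}), \rmheight(\pared{t_2})) \leqslant 1 + |t_1| + |t_2| = |t|$ by the induction hypothesis. The one clause that does real work is the redex case $t = (\lambda x.\,t_0)\,t_1$, where $\pared{t} = \pared{t_0}\{x := \pared{t_1}\}$ is a substitution; here I would first record the auxiliary bound $\rmheight\big(s\{x := r\}\big) \leqslant \rmheight(s) + \rmheight(r)$, proved by an easy induction on $s$ (each occurrence of $x$ in $s$ sits at height at most $\rmheight(s)$ within the tree of $s$, and is replaced by a tree of height $\rmheight(r)$). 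Combining this with the induction hypotheses applied to the subterms $t_0$ and $t_1$ yields $\rmheight(\pared{t}) \leqslant \rmheight(\pared{t_0}) + \rmheight(\pared{t_1}) \leqslant |t_0| + |t_1| \leqslant |t|$.

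The whole argument is elementary, with generous slack in every inequality; the only point that requires a moment's thought — and the one I would flag as the main (if modest) obstacle — is the redex clause, which forces one to bound how a substitution can inflate the height of a syntax tree, hence the auxiliary lemma stated above.
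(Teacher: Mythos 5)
Your proof is correct and follows essentially the same route as the paper: the left-hand inequality by structural induction on the definition of $\pared{t}$ with the auxiliary bound $\rmheight(s\{x:=r\}) \leqslant \rmheight(s) + \rmheight(r)$ for the redex clause, and the right-hand inequality by the binary-tree node count (your strengthening to $2^{\rmheight(t)}-1$ and the explicit leaf-height-$1$ convention just make precise what the paper leaves implicit).
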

(We have not managed to find this statement in the literature, but it is probably not new.)
\begin{proof}
  The first inequality can be proved by structural induction on Definition~\ref{def:pared}, using
  \[ \rmheight(\pared{(t')}\{x:=\pared{u}\}) \leqslant \rmheight(\pared{(t')})+\rmheight(\pared{u})\]
  while the second inequality is due to the fact that each node in the syntax tree of a $\lambda$-term has at most two children.
\end{proof}

As a consequence, $|\pared{t}| \leqslant 2^{|t|}$ for any $\lambda$-term $t$. Putting all this together, we get:
\begin{thm}
  There is an algorithm that takes as input a simply typable $\lambda$-term $t$ and returns its normal form in time at most $2_d^{P(|t|)}$, where $d$ is the maximum redex degree in some typing of $t$ (not given as input) and $P$ is some polynomial independent from $t$ and $d$.
\end{thm}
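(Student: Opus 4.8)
The plan is to turn the two preceding results into an algorithm by brute iteration of parallel reduction. On input an untyped $\lambda$-term $t$ (with the promise that it is simply typable), the algorithm repeatedly replaces the current term $s$ by $\pared{s}$ — computed directly by the structural recursion of Definition~\ref{def:pared} — and stops as soon as the current term contains no redex, returning that term. Note that the algorithm never needs to see, or compute, a typing of $t$, since parallel reduction does not inspect types. For correctness, each iteration produces a $\beta$-reduct of the previous term (as $s \longrightarrow_\beta^* \pared{s}$), so the output is a $\beta$-reduct of $t$; and if we fix any simple typing of $t$ in which the maximal redex degree is $d$, then the Proposition above guarantees that $\pared[d]{t}$ is the normal form of $t$, so the algorithm halts after at most $d$ iterations and — $\beta$-reduction in the simply typed $\lambda$-calculus being confluent and terminating — the term it returns is the (unique) $\beta$-normal form of $t$.

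For the running time, write $s_k = \big|\pared[k]{t}\big|$. Since $\big|\pared{r}\big| \leqslant 2^{|r|}$ for every $\lambda$-term $r$, we get $s_{k+1} \leqslant 2^{s_k}$, hence $s_k \leqslant 2_k^{|t|}$ by induction on $k$; in particular every term handled by the algorithm, including the returned normal form, has size at most $2_d^{|t|}$. Consider one iteration applied to a term $s$ of size $m$. It is the recursion of Definition~\ref{def:pared}, whose only non-trivial step is the capture-avoiding substitution $\pared{(t')}\{x := \pared{u}\}$; because this substitution equals $\pared{}$ of a subterm of $s$, it — like every intermediate result — has size at most $\big|\pared{s}\big| \leqslant 2^m$, so each such substitution runs in time polynomial in $2^m$. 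Summing over the at most $m$ positions of $s$, and adding the linear-time test for the presence of a redex, one iteration costs $2^{O(m)}$. Plugging $m = s_k \leqslant 2_k^{|t|}$ and using a routine estimate, iteration number $k$ costs at most $2^{O(2_k^{|t|})} \leqslant 2_{k+1}^{O(|t|)}$, the last one thus costing at most $2_d^{O(|t|)}$.

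It remains to absorb the number of iterations. By the discussion in~\S\ref{sec:pedantic}, we may take the typing of $t$ witnessing $d$ to be one produced by linear-time type inference, for which every annotation type $A$ satisfies $\deg(A) = O(|t|)$; every subterm type is built from annotation types, so every redex degree, and hence $d$ itself, is $O(|t|)$, and in particular $d \leqslant 2_d^{|t|}$. The total running time is therefore at most $d \cdot 2_d^{O(|t|)} = 2_d^{O(|t|)} \leqslant 2_d^{P(|t|)}$ for a suitable polynomial $P$ (for instance $P(n) = cn$) depending neither on $t$ nor on $d$, as claimed.

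The analysis is almost entirely routine; the one point that genuinely requires the two preceding results to cooperate is the exponent $d$ in the final tower. A single parallel-reduction step can exponentiate the size of the term, so iterating it $N$ times produces a term of size roughly $2_N^{|t|}$; what keeps $N$ down to $d$ — rather than to the (generally non-elementary) number of ordinary $\beta$-steps — is exactly that parallel reduction strictly lowers the maximal redex degree, so that $N \leqslant d$. Combined with $d = O(|t|)$, the tower then grows by exactly one level per step and the bound $2_d^{P(|t|)}$ comes out. Replacing parallel reduction by single-step $\beta$-reduction would break this balance, which is precisely why the Tait/Martin-Löf-style notion is used here.
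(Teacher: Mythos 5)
Your proof is correct and takes essentially the same approach as the paper's: iterate parallel reduction, which reaches the normal form in at most $d$ steps by the cited proposition, with the $k$-th iterate of size at most $2_k^{|t|}$ and each step computable in time exponential in its input. The paper's own proof is simply a terser version of your size and time accounting (it feeds $\pared[(n-1)]{t}$, of size at most $2_{n-1}^{|t|}$, to a naive $1$-\FExpTime procedure), so your extra details about absorbing the factor $d$ and bounding the cost of substitutions are welcome but not a departure.
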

\begin{proof}
  The algorithm iterates parallel reductions until it reaches a normal form. This takes at most $d$ steps; for each step $n\in\{1,\dots,d\}$, computing $\pared[n]{t}$ can be done by feeding the input $\pared[(n-1)]{t}$ -- which has size at most $2_{n-1}^{|t|}$ -- to a naive 1-\FExpTime procedure.
\end{proof}
Given two simply typable $\lambda$-terms, we can compute their normal forms and compare them to decide whether they are $\beta$-convertible. Together with the considerations from \S\ref{sec:pedantic}, this establishes the \enquote{membership in \Tower{}} part of Theorem~\ref{thm:stlc-intro}. (Again, we do not make any claim to originality concerning the material in this section; it has been included for expository purposes.)

\subsection{Finer bounds with the order}\label{sec:stlc-order}

Actually, a \Tower algorithm for normalization could also be obtained just by
applying successive (non-parallel) $\beta$-reductions. The complexity then
depends on the length of reduction sequences for simply typed $\lambda$-terms,
for which bounds are known~\cite{Schwichtenberg,Beckmann}. But note that these bounds are stated using the
\emph{order} of the types in a $\lambda$-term, not their degree!
\begin{rem}
  The paper~\cite{Beckmann} uses different terminology: \enquote{level} refers
  to what we call the order, while \enquote{degree} refers to the maximum order
  of the types appearing in a term. See also~\cite{AsadaKST19} for a
  probabilistic analysis of $\beta$-reduction length in the simply typed
  $\lambda$-calculus, and~\cite[Section~4.4]{Singh} for something similar in the
  linear $\lambda$-calculus.
\end{rem}

For any simple type $A$, we have $\rmord(A)\leqslant\deg(A)$ and the gap can be arbitrarily large:
\[ \deg(\underbrace{o \to \dots \to o}_{n\ \text{times}} \to o) = n \qquad \rmord(\underbrace{o \to \dots \to o}_{n\ \text{times}} \to o) = 1 \]
While using the degree seems to be the easier way to devise a normalization algorithm together with a proof of a \Tower complexity bound, it seems that the order is the parameter that truly controls complexity. As further evidence towards this (cf.~\cite[\S1.3.6]{titoPhD} for a longer discussion of these three items):
\begin{itemize}
  \item the normalization problem for simply typed $\lambda$-terms of type $\Bool$ containing only subterms whose types have \emph{order at most $2k+2$} is $k$-\ExpTime-complete~\cite{Terui};
  \item in fact, $k$-\ExpTime can be characterized as the predicates expressed by simply typed $\lambda$-terms of \emph{order at most $2k+4$} using a certain input-output convention~\cite{HillebrandKanellakis}, refining an earlier characterization of \Elementary in the simply typed $\lambda$-calculus with no constraint on the order~\cite{HKMairson};
  \item the \enquote{call-by-name translation} ($A \to B =\;!A \multimap B$) from the simply typed $\lambda$-calculus to propositional linear logic (LL) maps the order to the nesting depth of the exponential modality~`$!$', and `$!$' is generally considered to be the main source of complexity in LL (see~\cite{MELLbyLevel} for a connection between the exponential depth in propositional LL and the kind of stratification used in LL-based implicit computational complexity).
\end{itemize}

\section{Safe \texorpdfstring{$\beta$}{beta}-convertibility is Tower-hard}
\label{sec:safe}

As announced in Section~\ref{sec:intro-safe} of the introduction, our goal here is to show:
\begin{thm}\label{thm:safe-hard}
  Given a homogeneous long-safe $\lambda$-term $t$ of type $\Bool$, it is \Tower-hard to decide whether $t =_{\beta} \tttrue$.
\end{thm}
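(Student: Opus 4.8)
The plan is to reduce from the \emph{star-free expression equivalence problem}: given two star-free regular expressions $e_1,e_2$ over a finite alphabet $\Sigma$ (built from $\emptyset$, $\varepsilon$ and the letters using union, concatenation and complement), do they denote the same language? This problem is \Tower-complete -- essentially a repackaging of classical non-elementary lower bounds for star-free expressions, see~\cite{Schmitz}. Since the star-free languages are effectively closed under all Boolean operations, one can compute from $(e_1,e_2)$, in polynomial time, a single star-free expression $e$ with $L(e)=\Sigma^*$ iff $L(e_1)=L(e_2)$ -- namely $e=\overline{\overline{e_1}+\overline{e_2}}+\overline{e_1+e_2}$, the complement of the symmetric difference. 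So it suffices to produce, from a star-free expression $e$ over $\Sigma$, a homogeneous long-safe $\lambda$-term $t_e:\Bool$ computable from $e$ in elementary time, such that $t_e=_\beta\tttrue$ exactly when $L(e)=\Sigma^*$.

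First I would build, by structural recursion on $e$, a \emph{closed, safe} $\lambda$-term $\chi_e$ that \emph{recognizes} $L(e)$: applied to the Church encoding $\overline w$ of a word $w\in\Sigma^*$ -- of type $\Str_\Sigma[A]=(A\to A)\to\cdots\to(A\to A)\to A\to A$, with one argument per letter, over a suitable base type $A$ -- it reduces to $\tttrue$ if $w\in L(e)$ and to $\ttfalse$ otherwise. The atoms are handled by fixed small terms (\eg\ emptiness of $\overline w$ is tested by running it at base type $\Bool$); union is $\lambda w.\,\ttor\,(\chi_{e_1}\,w)\,(\chi_{e_2}\,w)$, which may use its argument twice; complement is $\lambda w.\,\ttnot\,(\chi_e\,w)$; and concatenation is the delicate case, handled by a \enquote{guess the cut point} term that threads $\overline w$ through a fold whose state carries, besides the prefix $p$ read so far, a higher-order value $\tau_p:(\Str_\Sigma[A]\to\Bool)\to\Bool$ recording, for each predicate $\phi$ on suffixes, the disjunction over all factorizations $p=uv$ of $\ttand\,(\chi_{e_1}\,\overline u)\,(\phi\,\overline v)$; feeding $\chi_{e_2}$ to $\tau_w$ at the end gives $\bigvee_{w=uv}\ttand\,(\chi_{e_1}\,\overline u)\,(\chi_{e_2}\,\overline v)$. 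The base type $A$ needed grows, by a bounded amount per nested concatenation, with the syntactic depth of $e$, so $\chi_e$ has order $O(|e|)$; this is harmless, since Theorem~\ref{thm:safe-hard} puts no bound on the order. The point -- in view of the obstruction pointed out by Blum and Ong -- is that every building block, hence $\chi_e$ itself, must stay \emph{safe} (and homogeneous, and long): Church strings and the elementary plumbing combinators are safe by inspection, and applying a closed safe term to a safe argument all of whose free variables have order at least the order of the application's type yields a safe term (whereas $\lambda$-abstraction is \emph{not} safety-preserving, so each abstraction has to come from a pre-vetted combinator). Concatenation needs extra care: the naive Church-list $\mathrm{map}$ combinator, for one, is unsafe, but the fold-based formulation above is not.

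Next, $\chi_e$ by itself does not decide universality, so I would wrap it in a \emph{bounded universal quantifier over words}. The arithmetic ingredient is that a star-free expression of size $n$ denotes a language whose minimal DFA has at most $M(n):=\tower(c\,n)$ states for a fixed constant $c$ -- each complement costs at most one exponential blow-up of the automaton, unions and concatenations only polynomial ones, and the complement-nesting depth is at most $n$ -- so $L(e)=\Sigma^*$ iff every word of length at most $M(n)$ belongs to $L(e)$. I would therefore write a closed safe term computing $\bigwedge_{|w|\le M(n)}\chi_e(\overline w)$ as a fold of depth $M(n)$ over the $|\Sigma|$-branching tree of words: the iteration is driven by a \enquote{tower} Church numeral $\tttow_n$ at a suitably iterated base type -- a term of order $O(n)$, obtained from a fixed Church numeral by iterating exponentiation, which one checks is safe -- while each leaf word $\overline w$ is built by prepending letters to $\emptycode$, and the leaf values $\chi_e(\overline w)$ are combined by $\ttand$. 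Letting $t_e$ be the resulting closed term of type $\Bool$, presented in $\eta$-long form (which remains safe and homogeneous), we get $t_e=_\beta\tttrue$ iff $L(e)=\Sigma^*$ iff $L(e_1)=L(e_2)$, with $(e_1,e_2)\mapsto t_e$ computable in elementary time. This yields \Tower-hardness; since $\beta$- and $\beta\eta$-convertibility coincide at type $\Bool$ (as recalled in the introduction) the same reduction works for $\beta\eta$, and -- forgetting safety -- it re-proves the hard direction of Theorem~\ref{thm:stlc-intro}.

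The step I expect to be the real obstacle is keeping the \emph{whole} construction inside the safe -- moreover homogeneous, moreover long -- fragment. The automata-theoretic and arithmetic bookkeeping is routine, but engineering the concatenation recognizer so that it never places a subterm in argument position whose free variables have order below that of the subterm rules out several obvious implementations: it forces the fold-based design, and the use of a single sufficiently rich base type over which Church strings can be coerced down to simpler base types (legitimate since all the auxiliary types in play are inhabited). One must still verify that the tower-building term and the $\eta$-long packaging respect safety and homogeneity as well.
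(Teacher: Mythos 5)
Your high-level plan coincides with the paper's: reduce from star-free expression equivalence, compile star-free expressions into (homogeneous, long-)safe recognizers by structural induction, bound the relevant word length by a tower, and perform the bounded quantification over words inside the $\lambda$-calculus via an iterated-exponential Church numeral. The dualization (universality and $\bigwedge$ instead of emptiness and $\bigvee$) is immaterial. However, the proposal has genuine gaps at exactly the two points you yourself flag as \enquote{the real obstacle}, and they are not mere verification.

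First, the union case as written, $\lambda w.\,\ttor\,(\chi_{e_1}\,w)\,(\chi_{e_2}\,w)$, does not even typecheck in the simply typed $\lambda$-calculus (safety aside) once $A_{e_1}\neq A_{e_2}$: the single variable $w$ would need both types $\Str_\Sigma[A_{e_1}]$ and $\Str_\Sigma[A_{e_2}]$, and there is no polymorphism. Your proposed repair -- coercing Church strings down to other base types because those types are inhabited -- does not work: inhabitation of $A$ yields no term of type $\Str_\Sigma[B]\to\Str_\Sigma[A]$, since a string at base type $B$ can only be applied to step functions of type $B\to B$. The paper's Lemma~\ref{lem:star-free-lambda} instead \emph{re-folds} the input in continuation-passing style to manufacture two copies of the string at the two distinct types (accumulator $\Str_\Sigma[A_E]\to\Str_\Sigma[A_F]\to\Bool$), and must additionally compare $\rmord(A_E)$ with $\rmord(A_F)$ to order the pair so that the resulting type is homogeneous. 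Second, in your concatenation case the predicate-transformer accumulator $\tau_p$ must also carry the prefix $\overline{p}$ (to evaluate $\chi_{e_1}$ on it at each step), and that string variable then occurs free inside the higher-order subterm $\lambda\phi.\,\ldots$ of strictly greater order, placed in argument position -- precisely the configuration that safety forbids unless the orders of $A_{e_1}$ and $A_{e_2}$ are arranged just so. So the claim that \enquote{the fold-based formulation is not unsafe} cannot simply be asserted; making this case safe is the actual content of the hard part of the theorem. The paper circumvents it with a different decomposition -- a marker letter $\square$, the $\ttsplit$ function of Lemma~\ref{lem:split} producing all marked factorizations of the input, and the $\ttany$ quantifier of Lemma~\ref{lem:test-all-strings} -- together with an explicit order-padding device (a dummy $\Nat\to\Str_\Gamma$ argument) to restore safety where a low-order variable would otherwise occur free in a higher-order argument subterm. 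Until these two cases are actually carried out inside the safe fragment, the reduction is not established.
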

Recall that the type $\Bool$ is defined as $o \to o \to o$. It has exactly two inhabitants in normal form: $\tttrue = \lambda xy.\,x$ and $\ttfalse = \lambda xy.\,y$. They are both safe~\cite[Remark~2.5(ii)]{BlumOng}, and even homogeneous long-safe (as defined in the next subsection).

\subsection{Homogeneity and long-safety}

Before proceeding with the proof of Theorem~\ref{thm:safe-hard}, let us explain the additional restrictions -- beyond the safety condition recalled in Section~\ref{sec:intro-safe} -- imposed on the input terms. We chose to prove hardness in presence of these restrictions because they are natural in the context of higher-order recursion schemes (cf.~\S\ref{sec:hors}).

%, where the notion of safety comes from~\cite{SafeHORS}.

First, \emph{homogeneity} is a simple syntactic criterion on types.
\begin{defi}
  A simple type $A_1 \to \dots \to A_n \to o$ (for $n\geq0$) is homogeneous when each of the $A_i$ is itself homogeneous ($i\in\{1,\dots,n\}$) and $\rmord(A_1) \geq \dots \geq \rmord(A_n)$.
\end{defi}

The second restriction is \emph{long-safety}. In Church style, it is the same as the notion of safety in \S\ref{sec:intro-safe} except more subterms of a simply typed $\lambda$-term  are considered to be in \enquote{unsafe position}: if $u$ is not a $\lambda$-abstraction, then it is in unsafe position in $t = C[\lambda x.\,u]$. Equivalently, in Curry style:
\begin{defiC}[{\cite[Def.~1.21]{BlumOng}}]
    The typing rules of the long-safe $\lambda$-calculus are
    \[ \frac{}{x : A \vdash x : A} \qquad \frac{\Theta \vdash t : A}{\Theta'
    \vdash t : A}\ (\Theta \subset \Theta') \]
    \[ \frac{\Theta \vdash t : A_1 \to \ldots \to A_n \to B \quad \Theta \vdash
    u_1 : A_1 \;\ldots\; \Theta \vdash u_n : A_n}{\Theta \vdash
    t\,u_1\,\ldots\,u_n : B}\ \text{when}\ \rmord(B) \leqslant
     \inford(\Theta) \]
    \[ \frac{\Theta,\, x_1 : A_1,\, \ldots,\, x_n : A_n \vdash t : B}{\Theta
    \vdash \lambda x_1 \dots x_n.\, t : A_1 \to \dots \to A_n
    \to B}\ \text{when}\ \rmord(A_1 \to \dots \to A_n \to B) \leqslant
    \inford(\Theta) \]
    \[ \text{where}\ \inford(\Theta) = \displaystyle \;\inf_{\mathclap{(y:C)\in\Theta}}\; \rmord(C)\ \text{with the usual convention}\ \inf(\varnothing) = +\infty. \]
\end{defiC}
For the rest of Section~\ref{sec:safe}, we only work with Curry-style typed $\lambda$-terms. This is because it allows our constructions to be carried out in polynomial time. If we had to build Church-style terms, writing out the type annotations would take exponential time, as explained in~\S\ref{sec:pedantic} (though this is harmless in a \Tower-hardness proof; the point is just to have a more precise statement).

\begin{defi}
  For a $\lambda$-term $t$, a simple type $A$ and a context $\Gamma$, we write $\Gamma \hls t : A$ when there is a derivation \emph{using only homogeneous types} of $\Gamma \vdash t : A$ in the above type system. In that case we say that $t$ is \emph{homogeneous long-safe}.
\end{defi}

\subsection{Star-free expressions}

As we said in the introduction, our \Tower-hardness proof proceeds by a reduction from an automata-theoretic problem, which we recall now.

\begin{defi}
  A \emph{star-free expression} over a finite alphabet $\Sigma$ is a regular expression without the Kleene star, but with complementation, defining a language $\denot{E} \subseteq \Sigma^*$:
  \[ E,E' \mathrel{::=} \varnothing \mid \overbrace{\varepsilon}^{\mathclap{\text{empty string}}} \mid \underbrace{a}_{\mathclap{\text{letter}\ a\in\Sigma}} \mid E \cup E' \mid \overbrace{E \cdot E'}^{\mathclap{\text{concatenation}}} \mid \underbrace{\lnot E}_{\mathclap{\text{complement}}} \]
   The \emph{star-free equivalence problem} consists in deciding whether two star-free expressions (over the same alphabet) denote the same language: given $E$ and $E'$, does $\denot{E} = \denot{E'}$ hold?
\end{defi}
For instance, over the alphabet $\Sigma=\{a,b,c\}$, we have the equivalence
\[ \denot{a\cdot\lnot\varnothing \cup b\cdot\lnot\varnothing} = \denot{\lnot(\varepsilon\cup c\cdot\lnot\varnothing)} = \{ w \in \Sigma^* \mid \text{$w$ starts with an $a$ or a $b$} \} \]
Schmitz's paper introducing \Tower presents star-free equivalence as a typical example of a \Tower-complete problem~\cite[\S3.1]{Schmitz} (rephrasing a hardness result proved by Meyer and Stockmeyer, cf.~\cite[\S4.2]{Stockmeyer}). Furthermore, equivalence reduces to emptiness: given two expressions $E$ and $F$, the language denoted by $\lnot(\lnot E \cup F) \cup \lnot(E \cup \lnot F)$ is empty if and only if $E$ and $F$ are equivalent. And to test whether $\llbracket E \rrbracket = \varnothing$, it is well known that it suffices to examine words of length at most $\tower(|E|-1)$:
\begin{prop}[{cf.~e.g.~\cite[proof of Prop.~4.25]{Stockmeyer}}]\label{prop:stockmeyer}
  If a star-free expression of size $n+1$ denotes a non-empty language $L$, then $L$ contains a word of length at most $\tower(n)$.
\end{prop}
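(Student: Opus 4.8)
The plan is to attach to every star-free expression $E$ a deterministic finite automaton recognising $\denot{E}$, built by induction on $E$, and to bound its number of states by a tower function of $|E|$; the shortest word of a non-empty language recognised by an $N$-state DFA then has length at most $N-1$, because a breadth-first search from the initial state reaches an accepting state -- whenever one is reachable at all -- in fewer than $N$ steps. Equivalently one can reason directly with the Myhill--Nerode congruence of $\denot{E}$ and bound its index, i.e.\ the minimal DFA size; this is the formulation in which the complement case costs nothing.

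For the construction: the atoms $\varnothing$, $\varepsilon$, $a$ are recognised by automata with at most three states; for $\lnot E$ one reuses the automaton of $E$, swapping accepting and non-accepting states, so the state count does not grow; for $E_1 \cup E_2$ one takes the product automaton, with $N_1 \cdot N_2$ states (writing $N_i$ for the number of states of the automaton of $\denot{E_i}$); and for $E_1 \cdot E_2$ one applies the subset construction to the $\varepsilon$-NFA that runs the automaton of $\denot{E_1}$ and, whenever it is in an accepting state, may jump to the initial state of the automaton of $\denot{E_2}$ -- a determinised state being a pair consisting of a state of the first automaton and a set of states of the second, so $N_1 \cdot 2^{N_2}$ states suffice. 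Feeding these recurrences into the tower function, with the help of elementary inequalities such as $\tower(k)^2 \leq \tower(k+1)$ and $2\,\tower(k) \leq \tower(k+1)$, already yields that a non-empty $\denot{E}$ contains a word of length bounded by a tower of twos of height linear in $|E|$; this qualitative version is in fact enough for every use made of the statement later in the paper.

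Obtaining the sharp constant $\tower(n)$ for size $n+1$ requires tighter accounting, and I expect the concatenation step -- specifically the degenerate sub-cases in which one of the factors is a complemented atom -- to be the point that needs care. Two refinements do it. First, because complement is free, one should measure $E$ not by $|E|$ but by the number $\nu(E)$ of $\cup$- and $\cdot$-nodes it contains -- which satisfies $\nu(E) \leq (|E|-1)/2$, a tree with $k$ binary nodes having $k+1$ leaves -- and prove by induction that $\denot{E}$ is recognised by a DFA with at most $\tower(\nu(E)+2)$ states. Second, the bound $N_1 \cdot 2^{N_2}$ for concatenation is wildly wasteful exactly when one factor is a complemented atom, its language then being one of $\varnothing$, $\{\varepsilon\}$, $\{a\}$, $\Sigma^*$, $\Sigma^+$, $\Sigma^* \setminus \{a\}$; but this is precisely the situation $\nu=0$ in which the induction has no slack, and in each of those finitely many sub-cases a direct description of the automaton for the concatenation gets by with $O(N_2)$ states (when that factor denotes a finite language) or $2^{N_2-1}+O(1)$ states (when it denotes a cofinite one), which is just what is needed. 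Converting the bound $\tower(\nu(E)+2)$ into the announced one uses $\nu(E)+2 \leq |E|-1$, valid once $|E| \geq 5$, the finitely many expressions of size at most $4$ being checked directly. (A conceptually parallel alternative, relying on less elementary machinery, translates $E$ into a first-order sentence over finite words equipped with the order and the letter predicates, whose quantifier rank equals -- up to an additive constant -- the nesting depth of concatenations in $E$, and invokes the Ehrenfeucht--Fra\"{\i}ss\'e bound on the length of a shortest model of such a sentence in terms of its quantifier rank, the latter playing the role of $\nu(E)$.)
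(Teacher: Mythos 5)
Your argument is correct, but it localizes the exponential blowup at a different operation than the proof the paper points to. The paper defers the precise bound to Stockmeyer and sketches (in the final remarks, Section~\ref{sec:dot-depth}) an induction producing an \emph{NFA} for $\denot{E}$: there union and concatenation are cheap, and the single-exponential cost sits at complementation, via determinization, so the tower height is governed by the nesting of negations. You instead build a \emph{DFA} by induction, so complement is free, union is a product, and the exponential cost sits at concatenation (subset construction on the right factor), with the tower height governed by the number $\nu(E)$ of binary nodes. Both routes give the tower bound, and I checked your accounting: the recurrences $N_1N_2$ and $N_1\cdot 2^{N_2}$ do stay below $\tower(\nu(E)+2)$ using $\tower(k)^2\leqslant\tower(k+1)$, the only tight spot being a concatenation whose \emph{left} factor has $\nu=0$ (when the complemented atom is the right factor, the generic bound already fits for $\nu_1\geqslant 1$), and your direct treatment of those finitely many degenerate cases ($O(N_2)$ states for a finite left factor, about $2^{N_2-1}$ for a cofinite one, since every reachable subset then contains the second automaton's initial state) closes the gap; the conversion $\nu(E)+2\leqslant|E|-1$ for $|E|\geqslant 5$ plus a finite check is also fine. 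What your version buys is a self-contained derivation of the stated constant without consulting Stockmeyer, and -- amusingly -- it mirrors the paper's own $\lambda$-term translation of Lemma~\ref{lem:star-free-lambda}, where complementation is trivial and concatenation is the operation that raises the order; the NFA route buys shorter bookkeeping (no degenerate subcases) at the price of putting the blowup at negation, which is less faithful to where the hardness really lives (the alternation of the two, i.e.\ dot depth).
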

(Further remarks on the above proposition and on the complexity of star-free equivalence are given at the end in Section~\ref{sec:dot-depth}.) Altogether, this discussion leads us to conclude that:
\begin{lem}\label{lem:sfempty-hard}
  The following problem is \Tower-hard:
  \begin{itemize}
    \item \emph{Input:} a star-free expression $E$ and a natural number $n$ given in unary.
    \item \emph{Output:} whether or not $\llbracket E \rrbracket$ contains a word of length at most $\tower(n)$.
  \end{itemize}
\end{lem}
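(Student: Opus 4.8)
The plan is to chain together the three ingredients assembled just above the statement: \Tower-hardness of star-free equivalence, the reduction of equivalence to emptiness via a boolean combination of expressions, and Proposition~\ref{prop:stockmeyer}, which bounds the length of a shortest word in a non-empty star-free language by a tower in the expression size.

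First I would recall that star-free \emph{equivalence} is \Tower-hard (Meyer--Stockmeyer, as cited), and note that \Tower, being a deterministic time complexity class, is closed under complement; hence star-free \emph{inequivalence} is \Tower-hard as well. I will then exhibit a reduction from inequivalence to the problem in the statement. Given two star-free expressions $E$ and $F$ over a common alphabet $\Sigma$, form $G = \lnot(\lnot E \cup F) \cup \lnot(E \cup \lnot F)$, so that $\denot{G} = \varnothing$ if and only if $\denot{E} = \denot{F}$; building $G$ takes time linear in $|E|+|F|$. Write $m + 1 = |G|$. By Proposition~\ref{prop:stockmeyer}, if $\denot{G} \neq \varnothing$ then $\denot{G}$ contains a word of length at most $\tower(m)$, and conversely if $\denot{G}$ contains any word at all then it is certainly non-empty. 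Therefore $\denot{G}$ contains a word of length $\leqslant \tower(m)$ if and only if $\denot{G} \neq \varnothing$, i.e. if and only if $\denot{E} \neq \denot{F}$.

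The reduction thus sends the pair $(E,F)$ to the instance $(G, m)$ of our problem. Since $m = |G| - 1$ is polynomially bounded in $|E| + |F|$, its unary encoding has polynomial size, so the whole map is computable in polynomial time — a fortiori it is an \FElementary many-one reduction. As star-free inequivalence is \Tower-hard, the problem of the lemma is \Tower-hard too.

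I do not anticipate a real obstacle: the proof is essentially the assembly of facts already established in the excerpt. The only points worth a line of justification are that \Tower is closed under complementation — so that we may legitimately reduce from inequivalence rather than from equivalence — and that handing $n = m$ over in unary costs nothing, precisely because $|G|$ is polynomial in the size of the equivalence instance.
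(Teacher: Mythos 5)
Your proposal is correct and follows exactly the route the paper intends: it assembles the Meyer--Stockmeyer hardness of star-free equivalence, the reduction of (in)equivalence to emptiness via $\lnot(\lnot E \cup F) \cup \lnot(E \cup \lnot F)$, and Proposition~\ref{prop:stockmeyer} to bound the witness length by $\tower(|G|-1)$, with the closure of \Tower under complement justifying the switch to inequivalence. The paper leaves this assembly implicit (\enquote{Altogether, this discussion leads us to conclude}), and your write-up simply makes the same argument explicit.
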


\subsection{From star-free expressions to safe \texorpdfstring{$\lambda$}{lambda}-terms}

We shall represent (star-free) languages as functions from strings to booleans. We already have a coding of booleans, that is involved in the statement of Theorem~\ref{thm:safe-hard}. For strings, we use the standard \emph{Church encodings}.

\begin{defiC}[{\cite[\S4]{BohmBerarducci}}]
  The type $\Str_\Sigma = \overbrace{(o \to o) \to \dots \to (o \to
    o)}^{|\Sigma|\ \text{times}} \to o \to o$ is the type of so-called
  \emph{Church encodings} of strings over the finite alphabet $\Sigma$. In the case of a unary alphabet, the type of \emph{Church numerals} is $\Nat = \Str_{\{\mathrm{I}\}} = (o \to o) \to o \to o$.

  The Church encoding of a word $w$ over an ordered alphabet
  $\Sigma=\{c_1,\ldots,c_n\}$ (resp.\ of a number $n\in\naturals$) is
  $\overline{w} = \lambda f_{c_1} \dots f_{c_n} x.\; \underbrace{f_{w[1]}\;(\ldots\;(f_{w[n]}}_{\mathclap{w[i]\ \text{refers to the $i$-th letter of}\ w}}\;x)) : \Str_{\Sigma}$
  (resp.\ $\overline{n} = \overline{\underbrace{\mathrm{I}\dots{}\mathrm{I}}_{\mathclap{n\ \text{times}}}} : \Nat$).
\end{defiC}
For any finite alphabet $\Sigma$ and string $w\in\Sigma^*$, one can check that the type $\Str_\Sigma$ is homogeneous and that $\hls \overline{w} : \Str_\Sigma$ (the safety of $\overline{w}$ is already used in~\cite[\S2.2]{BlumOng}). Next, to encode languages, a fundamental remark is that if $A$ and $B$ are homogeneous simple types and $\hls t : A$, then $\hls t : A[B]$, where:
\begin{nota}
  $A[B]$ denotes the substitution of all occurrences of the base type $o$ in the simple type $A$ by the simple type $B$.
\end{nota}
This allows us to use a $\lambda$-term $t$ such that $\hls t : \Str_\Sigma[B]\to\Bool$ to represent the language $\{ w \in \Sigma^* \mid t\;\overline{w} =_\beta \tttrue \}$ -- this makes sense since $\hls t\;\overline{w} : \Bool$. The key lemma in our proof of Theorem~\ref{thm:safe-hard} is that star-free expressions can be efficiently translated to such representations by (Curry-style) homogeneous safe $\lambda$-terms.

\begin{lem}\label{lem:star-free-lambda}
  A star-free expression $E$ over a finite alphabet $\Sigma$ can be turned in polynomial time into an $\lambda$-term $t_E$ such that
  \begin{itemize}
    \item $\hls t_{E} : \Str_{\Sigma}[A_{E}] \to \Bool$ for some simple type $A_E$;
    \item for any $w\in\Sigma^*$, we have $t_{E}\, \overline{w} =_\beta \tttrue$ if and only if $w \in \llbracket E \rrbracket$.
  \end{itemize}
(The type $A_E$ is not part of the output of the polynomial-time algorithm taking $E$ and $\Sigma$ as input. In fact, the size of $A_E$ may be exponential in the size of $E$.)
\end{lem}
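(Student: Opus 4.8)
The plan is to build $t_E$ by structural induction on the star-free expression $E$, keeping the invariant that $\hls t_E : \Str_\Sigma[A_E] \to \Bool$ for some homogeneous type $A_E$ which we describe explicitly in the proof but do not output, that $t_E\,\overline{w} =_\beta \tttrue$ precisely when $w \in \denot{E}$, and that $|t_E|$ is polynomial in the size of $E$. The types $A_E$ grow because the inductive cases substitute $A_E$ and $A_{E'}$ for the base type $o$ inside auxiliary types, so along a branch of the syntax tree of $E$ the size can repeatedly multiply; this yields the claimed exponential bound on $|A_E|$, which is harmless since $A_E$ is not part of the output.

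The base cases are immediate: $t_\varnothing = \lambda s.\,\ttfalse$; and $t_\varepsilon$, $t_a$ instantiate the base type of $\Str_\Sigma$ with a fixed small type and fold over the input to test whether it encodes $\varepsilon$ (resp.\ the one-letter word $a$). Complement is also immediate, $t_{\lnot E} = \lambda s.\,\ttnot\,(t_E\,s)$ with $A_{\lnot E} = A_E$.

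The interesting cases are union and concatenation, where the two sub-terms $t_E : \Str_\Sigma[A_E]\to\Bool$ and $t_{E'} : \Str_\Sigma[A_{E'}]\to\Bool$ must both be fed the input word although $A_E \neq A_{E'}$ in general. The plan is to take $A_{E\cup E'}$ (resp.\ $A_{E\cdot E'}$) to be a homogeneous type that has both $A_E$ and $A_{E'}$ as retracts, together with a safe \enquote{re-encoding} combinator that transports a Church string along a retraction; the remark recalled just before the lemma -- that $\hls t : A$ implies $\hls t : A[B]$ for homogeneous $B$ -- is what makes the transported string behave like $\overline{w}$ at the new type. Union then returns $\ttor\,(t_E\,s_1)\,(t_{E'}\,s_2)$ for two such re-encodings $s_1,s_2$ of the input. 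For concatenation, $t_{E\cdot E'}\,\overline{w}$ must decide whether $w$ factors as $w_1w_2$ with $w_1\in\denot{E}$ and $w_2\in\denot{E'}$; as the $\lambda$-calculus offers no way to test equality of, or quantify over, inhabitants of the higher types $A_E,A_{E'}$, we may only run $t_E$ and $t_{E'}$ as black boxes. The plan is to construct from $\overline{w}$, by one left-to-right and one right-to-left fold, the Church-encoded list of all prefixes $w[1..i]$ and the list of all suffixes $w[i{+}1..|w|]$ (as lists of Church strings), map $t_E$ over the first and $t_{E'}$ over the second, combine the two boolean lists pointwise with $\ttand$, and take the disjunction with $\ttany$; this is where auxiliary string and list combinators (duplication, concatenation, mapping, folding) are needed.

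The main obstacle is not the arithmetic of these combinators but fitting the whole construction into the long-safe, \emph{homogeneous}-type discipline. For every abstraction and application one must check the order side-conditions of the \hls{} rules against $\inford$ of the ambient context -- for example, whenever a subterm of type $\Bool$, whose order is $1$, is manipulated, no free variable of order $0$ may be in scope -- and one must pick every auxiliary type (the instantiations of $\Str_\Sigma$, the list types, and whatever stands in for pairing, since the arrow-only calculus has no built-in products) so as to respect the non-increasing-order condition of homogeneity, padding orders where necessary. Getting the combinators and the re-encoding maps to satisfy all of this simultaneously is the heart of the proof. Once it goes through, the correctness equivalence $t_E\,\overline{w} =_\beta \tttrue \iff w\in\denot{E}$ follows by a routine induction on $E$, evaluating the finitely many finite subcomputations and using confluence and termination of $\beta$-reduction in the simply typed $\lambda$-calculus.
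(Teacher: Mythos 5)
Your overall skeleton (induction on $E$, trivial $\varnothing$ and complement cases, a small fixed instantiation for $\varepsilon$ and single letters) matches the paper, but the two cases that carry all the difficulty -- union and concatenation -- are exactly where your proposal stops at a plan and the plan itself is doubtful. You explicitly write that getting the combinators, the re-encodings and the order side-conditions to be simultaneously long-safe and homogeneous ``is the heart of the proof'' and you do not carry it out; but this lemma has no content apart from that verification (in the plain simply typed $\lambda$-calculus the statement is classical), so deferring it is a genuine gap, not a routine omission. Worse, the specific mechanism you propose for union -- a common type of which $A_E$ and $A_{E'}$ are retracts, plus a ``safe re-encoding combinator'' transporting the input string along the retraction -- runs straight into the safety condition: the natural transport term applies the bound string $s : \Str_\Sigma[C]$ to arguments of the form $\lambda x.\, i\,(f\,(r\,x))$, and such an argument subterm has the high order of $C \to C$ while containing the free variable $f$ of the lower order of $A_E \to A_E$, which is precisely an unsafe position. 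You neither construct the retractions (which need not even exist for arbitrary homogeneous $A_E, A_{E'}$ over an uninhabited base type) nor check safety for them, and the same unverified-safety problem affects your concatenation gadget (Church-encoded lists of prefixes and suffixes, a zip of two boolean lists, and a CPS pairing, each of which is nontrivial to keep safe and homogeneous).

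For comparison, the paper avoids any re-encoding of a bound string variable: for $E \cup F$ it sets $A_{E\cup F} = \Str_\Sigma[A_E] \to \Str_\Sigma[A_F] \to \Bool$ and folds over the input in continuation-passing style, rebuilding \emph{two} copies of the word at the two needed types using $\ttconc$ and the literal letter encodings $\overline{c_i}$ (which, being closed Church terms, are typable at every instantiation), orienting the construction according to whether $\rmord(A_E) \geq \rmord(A_F)$ to preserve homogeneity; and for $E \cdot F$ it does not enumerate prefix/suffix pairs at all, but inserts a fresh marker letter $\square$ at every position of the input via a safe term $\ttsplit_\Sigma$ producing a single $\#$-separated string over $\Sigma \cup \{\square, \#\}$, then uses $\ttany$ together with a fold-based recognizer of $\denot{(E\square)^*F}$ built in the same two-copy style. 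If you want to salvage your route, you would have to exhibit the retraction/re-encoding terms and the list combinators concretely and push them through the $\hls$ rules (including the $\inford$ side-conditions and the non-increasing-order constraint of homogeneity), which is precisely the work the paper's two-copy and marker-letter tricks are designed to make possible.
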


To prove Lemma~\ref{lem:star-free-lambda}, we first note that boolean operations can be implemented by 
\[ \quad \ttand = \lambda b_1 b_2 xy.\; b_1\; (b_2\; x\; y)\; y \qquad \ttor = \lambda b_1 b_2 xy.\; b_1\; x\; (b_2\; x\; y) \qquad \ttnot = \lambda bxy.\; b\;y\;x \]
These $\lambda$-terms satisfy $\hls \ttand,\ttor : \Bool \to \Bool \to \Bool$ and $\hls \ttnot : \Bool \to \Bool$. Furthermore, we can implement concatenation on strings (cf.~\cite[Theorem~2.8]{BlumOng}) with
\[ \ttconc = \lambda s_1s_2 f_1 \dots f_{|\Sigma|} x.\; s_1\; f_1\; \dots\; f_{|\Sigma|}\; (s_2\; f_1\; \dots\; f_{|\Sigma|}\; x) \]
which is a homogeneous long-safe $\lambda$-term of type $\Str_\Sigma\to\Str_\Sigma\to\Str_\Sigma$. This generalizes to $k$-fold concatenation: $\hls \ttconc_{k} : \underbrace{\Str_\Sigma \to \dots \to \Str_\Sigma}_{k\ \text{times}} \to \Str_\Sigma$.

The next step in our proof is the following lemma, which is reused in Section~\ref{sec:conclude-proof}.
\begin{lem}\label{lem:test-all-strings}
  For every finite alphabet $\Sigma$, one can build in time $|\Sigma|^{O(1)}$ a $\lambda$-term $\ttany_\Sigma$ such that, for any homogeneous type $A$, there is some $F_{\ttany}(A)$ such that
  \[ \hls \ttany_\Sigma : \Str_{\Sigma\cup\{\#\}}[F_{\ttany}(A)] \to (\Str_{\Sigma}[A] \to \Bool) \to \Bool\]
  and for every list of words $w_0,\dots,w_n \in \Sigma^*$ and every $\lambda$-term $t : \Str_{\Sigma}[A] \to \Bool$,
  \[ \ttany_\Sigma\; \overline{w_0\# \dots \#w_n}\; t =_\beta \tttrue \quad\iff\quad \exists i \in \{0,\dots,n\}.\; t\, \overline{w_i} =_\beta \tttrue \]
\end{lem}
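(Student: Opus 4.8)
The plan is to build $\ttany_\Sigma$ in three layers. First, I would construct a primitive $\ttsplit_\Sigma$ that, given a Church encoding of $w_0\#\dots\#w_n$ over $\Sigma\cup\{\#\}$, iterates over the input string and, at each position, feeds the current ``accumulated'' piece to the predicate $t$ while also tracking whether any such test has succeeded so far. Concretely, the standard trick is to fold over the Church string with a state that is a pair $(\text{current word fragment since the last }\#,\ \text{running disjunction of the }t\ \text{-values})$: reading a letter $c_i \in \Sigma$ appends $\overline{c_i}$ to the fragment (using $\ttconc$, or rather the one-letter-prepend arising from the Church-numeral-style iteration), while reading $\#$ resets the fragment to $\overline{\varepsilon}$ and updates the running boolean by $\ttor$-ing in $t$ applied to the fragment just completed. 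After the fold, one more $\ttor$ with $t$ applied to the final fragment handles the last block $w_n$ (which is not terminated by a $\#$). The initial state is $(\overline\varepsilon, \ttfalse)$.

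The second layer is to realize this fold as a safe $\lambda$-term. Church encodings iterate a function of type $o\to o$ per letter, so the ``state type'' must be substituted for $o$; here the state type is a product, which I would Church-encode as $\mathtt{Prod}(P,Q) = (P \to Q \to o) \to o$, or more directly keep the two components curried since we only ever build and immediately destruct the pair. The key point for \emph{homogeneity} and \emph{long-safety} is exactly the observation already highlighted before the statement: if $\hls t : C$ for a homogeneous type $C$, then $\hls t : C[B]$ for any homogeneous $B$, and $\Str_\Sigma$ and $\Str_{\Sigma\cup\{\#\}}$ are homogeneous. So I set $F_{\ttany}(A)$ to be whatever homogeneous type encodes the state (a product of $\Str_\Sigma[A]$-ish components and $\Bool$), substituted into the base type of the outer Church string; the lemma's freedom to let $F_{\ttany}(A)$ be large — even though $\ttany_\Sigma$ itself is built in time $|\Sigma|^{O(1)}$ — is what makes this painless, since we never write the type out, we only need it to exist and be homogeneous. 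One must double-check that every subterm we introduce (the fold body, the projections, the uses of $\ttconc$, $\ttor$) type-checks in the long-safe system with homogeneous types; this is where the restriction $\rmord(B)\le\inford(\Theta)$ in the application and abstraction rules has to be verified, and it goes through because all the relevant orders are bounded by the order of $\Str_\Sigma[A]$ and the context always contains $t$ and the iterated-function variable at that order or higher.

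The third layer is the correctness argument: by induction on the list $w_0,\dots,w_n$ (equivalently, on the structure of the Church encoding being folded), the running boolean after processing the prefix $w_0\#\dots\#w_{j}\#$ equals $\ttor$ of $t\,\overline{w_0},\dots,t\,\overline{w_{j}}$ up to $\beta$-convertibility, and the fragment component equals $\overline\varepsilon$; and after processing further letters of $w_{j+1}$ the fragment equals the Church encoding of that partial word. Combining the final state with the extra $\ttor$ for $w_n$ gives $\beta$-equality to $\ttor$ of all the $t\,\overline{w_i}$, which is $\tttrue$ iff some $t\,\overline{w_i} =_\beta \tttrue$, since at type $\Bool$ every term is $\beta$-convertible to $\tttrue$ or $\ttfalse$ and $\ttor$ computes the expected truth table on those. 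The main obstacle I anticipate is not the combinatorics of the fold but the bookkeeping of the safety side-conditions: one has to be careful that the pair/product encoding does not introduce a subterm whose free variables have order strictly below the order of that subterm while sitting in argument position, and in particular that prepending a letter to the fragment is done by the ``safe'' concatenation pattern of $\ttconc$ rather than by a naive $\lambda$-abstraction that would violate long-safety. Getting $F_{\ttany}(A)$ to be genuinely homogeneous (the $\rmord$ of the state components must be weakly decreasing) may force a particular ordering of the curried arguments, which I would fix once and for all in the construction.
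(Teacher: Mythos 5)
Your plan is correct in substance but takes a genuinely different route from the paper. The paper realizes $\ttany_\Sigma$ as a \emph{right} fold (the natural direction for a Church encoding) whose accumulator is a single value of type $F_{\ttany}(A)=\Str_\Sigma[A]\to\Bool$, read as \enquote{the test that the part of the input lying to the left of the current position must pass}: a letter $c_i$ turns the accumulator $f$ into $\lambda x.\,f\,(\ttconc\,x\,\overline{c_i})$, a $\#$ turns it into $\lambda x.\,\ttor\,(p\,x)\,(f\,\emptycode)$ where $p$ is the predicate argument, the initial value is $p$ itself, and the result is the final accumulator applied to $\emptycode$. You instead run a left-to-right fold carrying the pair (fragment since the last $\#$, running disjunction), realized in continuation-passing style. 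Both compute the right boolean, and your correctness induction -- including the extra $\ttor$ for the last block $w_n$, which is not followed by a $\#$ -- is fine. The paper's single-component accumulator avoids any product and any question about the ordering of components; your two-component state has a more transparent invariant and mirrors what the paper itself does in the union case of Lemma~\ref{lem:star-free-lambda}. (A terminological slip: you call your helper $\ttsplit_\Sigma$, which is the name of a different gadget in the paper, namely Lemma~\ref{lem:split}.)

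Two points in your plan must be pinned down before this counts as a proof. First, you cannot leave open the choice between a Church-encoded product $(P\to Q\to o)\to o$ and a curried continuation: the former breaks safety, because the pair constructor $\lambda c.\,c\,a\,b$ places the components $a,b$ free inside a subterm of strictly higher order occurring in argument position, which is exactly the forbidden pattern. You must use the curried continuation $k:\Str_\Sigma[A]\to\Bool\to\Bool$, and in that argument order, since $\rmord(\Bool)=1<\rmord(\Str_\Sigma[A])$ -- you do anticipate this constraint at the end. Second, the safety bookkeeping you defer is not uniformly routine: it is concentrated in the step function for $\#$, the unique step function that is not closed, since it contains the predicate variable (of order $\rmord(\Str_\Sigma[A]\to\Bool)$) free inside an argument subterm whose type has order one higher. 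This is precisely the configuration that the side condition $\rmord(\cdot)\leqslant\inford(\Theta)$ of the long-safe rules constrains, and it arises in the same shape in the paper's own $\#$-handler; it is the one subterm for which the check must be carried out explicitly rather than gestured at, and if it does not go through as written the standard repair is to thread the predicate through the accumulator type so that every step function becomes closed.
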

\noindent
\begin{proof}
   Let $\Sigma = \{c_{1},\dots,c_{|\Sigma|}\}$. Take $F_{\ttany}(A) = \Str_{\Sigma}[A] \to \Bool$ and define $\ttany_\Sigma$ as
  \[ \lambda sp.\; s\; \overbrace{u_{1}\; \dots\; u_{|\Sigma|}}^{\mathclap{\text{correspond to letters in}\ \Sigma}}\; (\underbrace{\lambda fx.\; \ttor\; (p\; x)\; (f\;\emptycode)}_{\mathclap{\text{corresponds to}\ \#}})\; p\; \emptycode \quad \text{where}\
    u_{i} = \lambda fx.\; f\; (\ttconc\; x\; \overline{c_i}) \]
  with $x, \emptycode : \Str_{\Sigma}[A]$, $p,f : \Str_{\Sigma}[A] \to \Bool$, $u_{i} : (\Str_{\Sigma}[A] \to \Bool) \to \Str_{\Sigma}[A] \to \Bool$ and $s : \Str_{\Sigma\cup\{\#\}}[\Str_{\Sigma}[A] \to \Bool]$ in the typing derivation for $\ttany_{\Sigma}$.

  Explanation: computing $\ttany_\Sigma\; \overline{w_0\# \dots \#w_n}\;t$ performs a \enquote{right fold} over the input with an accumulator of type $\Str_{\Sigma}[A] \to \Bool$, representing a language $L(\sigma) \subseteq \Sigma^*$ where $\sigma$ is the suffix processed up until this point of the computation. The idea is that
  \[ u \in L(v\#w_m\#\dots\#w_n) \quad\iff\quad t\;\overline{uv} =_\beta \tttrue \lor \exists i \geq m.\; t\;\overline{w_i} =_\beta \tttrue \]
  This makes the condition that we want to check equivalent to $\varepsilon\in L(w_0\#\dots\#w_n)$.
\end{proof}

\begin{proof}[Proof of Lemma~\ref{lem:star-free-lambda}]
  By induction on the expression, leading to a recursive algorithm. We also want the algorithm to return $\rmord(A_E)$ on the input $E$, because this information on subexpressions is used in two cases to make sure that the term we build admits a homogeneous type. It will be clear that in every case, $\rmord(A_E)$ can be recursively computed and is bounded by $O(|E|)$.
  \begin{itemize}
  \item We take $A_\varnothing = o$ and $t_\varnothing = \lambda s.\; \ttfalse :
    \Str_\Sigma \to \Bool$.
  \item In the simply typed $\lambda$-calculus, testing for the empty word can be
    done with type $\Str_\Sigma \to \Bool$, but this is not possible with safe $\lambda$-terms
    as discussed in~\cite[Section~2]{BlumOng}. We use instead $A_\varepsilon =
    \Bool$ and
    \[ t_\varepsilon = \lambda s.\; s\; (\lambda x.\; \ttfalse)\; \dots\;
      (\lambda x.\; \ttfalse)\; \tttrue \]
  \item To test whether the word consists of a single letter, say, the first one in
    the alphabet $\Sigma$ (call it $c_1$), we use $A_{c_1} = o \to o \to o \to
    o$ and define $t_{c_1}$ to be a $\lambda$-term corresponding to a
    deterministic finite automaton that recognizes the language $\{c_1\}$, namely:
    \[ \lambda sxy.\; s\; (\overbrace{\lambda qz_0z_1z_2.\; q\; z_1\; z_2\; z_2}^{\mathclap{\qquad\qquad\text{represents the transition}\ \delta(c_1,q_0)=q_1,\; \delta(c_1,q_1)=\delta(c_1,q_2)=q_2}})\; \underbrace{(\lambda q.\; q_2)\; \dots\;(\lambda q.\;q_2)}_{|\Sigma|-1\ \text{times}}\;q_0\; \overbrace{y\;x\;y}^{\mathclap{\qquad\text{accepting states}\;=\;\{q_1\}}} \quad\text{where}\ \underbrace{q_i = \lambda z_0z_1z_2.\; z_i}_{\mathclap{\text{the 3 states of the DFA (of type $A_{c_1}$)}\qquad\qquad}} \]
  \item Complementation is trivially implemented by post-composing with $\ttnot$.
  \item To handle a union $E\cup F$, we first check whether $\rmord(A_E)\geq\rmord(A_F)$. If that is the case, we take $A_{E\cup F} = \Str_\Sigma[A_E] \to \Str_\Sigma[A_F] \to
    \Bool$ and
    \[ t_{E\cup F} = \lambda s.\; s\; \underbrace{\dots\;(\lambda kxy.\; k\; (\ttconc\;x\;\overline{c_i})\; (\ttconc\;y\;\overline{c_i}))\; \dots}_{\text{for each $c_i\in\Sigma$}}\; (\lambda xy.\; \mathtt{or}\; (t_E\;x)\; (t_F\;y))\; \emptycode\; \emptycode \]
    with $k : \Str_\Sigma[A_E] \to \Str_\Sigma[A_F] \to \Bool$, $x :
    \Str_\Sigma[A_E]$ and $y : \Str_\Sigma[A_F]$ in the typing derivation; the
    two occurrences of $\emptycode$ at the end must be given
    the different types $\Str_\Sigma[A_E]$ and $\Str_\Sigma[A_F]$ respectively.

    This can be seen as a continuation-passing-style transformation of the following procedure using pair types (which we do not have here): perform a \enquote{left fold} over the input string with an accumulator of type $\Str_\Sigma[A_E]\times\Str_\Sigma[A_F]$; for each input letter $c$, apply the function $(w,w')\in\Sigma^*\mapsto (wc,w'c)$, so that the result of the fold is two copies of the input string with different types; finally, check on $(w,w')$ whether $w\in\denot{E}$ or $w'\in\denot{F}$.
    
    When $\rmord(A_E)<\rmord(A_F)$, we take $t_{E\cup F} = t_{F\cup E}$. Observe
    that $t_E$ and $t_F$ appear only once in $t_{E\cup F}$ -- this is true in
    every recursive case, ensuring that $t_E$ has size $O(|E|)$. 
  \item The remaining case, concatenation, is the most delicate.
  
  First, we introduce a new letter $\square\notin\Sigma$ and build a
  $\lambda$-term $t'_{E,F}$ that computes the language $\denot{(E\square)^*F}$. When $\rmord(A_E)\geq\rmord(A_F)$, we get
      \[ \hls t'_{E,F} : \Str_{\Sigma\cup\{\square\}}[\Str_\Sigma[A_E] \to \Str_\Sigma[A_F] \to \Bool] \to \Bool \]
      by taking $t'_{E,F} = \lambda s.\; s\; v_{c_1}\; \dots\; v_{c_{|\Sigma|}}\; v_{\square}\; u\; \emptycode\; \emptycode$ (for $\Sigma=\{c_1,\dots,c_{|\Sigma|}\}$) where
      \[ \qquad v_c = \lambda fxy.\; f\; (\ttconc\;x\;\overline{c})\; (\ttconc\;y\;\overline{c})  \qquad  v_\square = \lambda fxy.\; \ttand\;(t_E\;x)\;(f\;\emptycode\;\emptycode) \qquad u = \lambda xy.\; t_F\; y\]
      Similarly to the proof of Lemma~\ref{lem:test-all-strings}, $t'_{E,F}$
      performs a \enquote{right fold} on its input string (over the alphabet
      $\Sigma\cup\{\square\}$), whose accumulator $f : \Str_\Sigma[A_E] \to
      \Str_\Sigma[A_F] \to \Bool$ represents a language (over $\Sigma$) --
      except that, analogously to the case $E \cup F$ above, this representation
      must be fed two copies $x : \Str_\Sigma[A_E]$ and $y :
      \Str_\Sigma[A_F]$ of the same string. We get
      \[ \{w \in \Sigma^* \mid w\sigma \in \denot{(E\square)^*F} \} \]
      after reading the input suffix $\sigma\in(\Sigma\cup\{\square\})^*$; therefore, once the whole input string has been traversed, we just need to
      check that the final accumulator contains $\varepsilon$.

      If $\rmord(A_E) < \rmord(A_F)$, to ensure homogeneity, we replace in the definition of $t'_{E,F}$
      \[ v_\square\ \text{by}\ \lambda kxy.\; \ttand\;(t_E\;y)\;(k\;\emptycode\;\emptycode) \quad\text{and}\quad u\ \text{by}\ \lambda xy.\; t_F\; x \]
      leading to $\hls t'_{E,F} : \Str_{\Sigma\cup\{\square\}}[\Str_\Sigma[A_F]
      \to \Str_\Sigma[A_E] \to \Bool] \to \Bool$.

      Then, the $\lambda$-term that we want is $t_{E\cdot F} = \lambda s.\;
      \overbrace{\ttany_{\Sigma\cup\{\square\}}}^{\mathclap{\text{Lemma~\ref{lem:test-all-strings}
            above}\qquad\quad}}\;
      (\overbrace{\ttsplit_\Sigma}^{\mathclap{\quad\qquad\text{Lemma~\ref{lem:split}
            below}}}\; s)\; t'_{E,F}$ where the role of $t'_{E,F}$ is to
      distinguish, among the strings containing exactly one `$\square$', those
      that belong to $\denot{E\square F}$ -- indeed,
      $\denot{(E\square)^*F}\cap\Sigma^*\square\Sigma^* = \denot{E\square F}$.
      \qedhere
  \end{itemize}
\end{proof}
\noindent
We isolate the following part of the above proof to serve as an example of independent interest (cf.~\S\ref{sec:iatlc}) of a string-to-string function expressed in the safe $\lambda$-calculus.
\begin{lem}\label{lem:split}
  For every finite alphabet $\Sigma$, one can build in time $|\Sigma|^{O(1)}$ a $\lambda$-term $\ttsplit_\Sigma$ such that, for some type $A_\ttsplit$,
  \[ \hls \ttsplit_\Sigma : \Str_{\Sigma}[A_{\ttsplit}] \to \Str_{\Gamma} \quad\text{where}\ \Gamma = \Sigma\cup\{\square,\#\} \]
  and for every list of letters $a_1,\dots,a_n \in \Sigma$,
  \[ \qquad\ttsplit_\Sigma\; \overline{a_1\dots a_n}\quad=_\beta\quad \overline{\square a_1 \dots a_n \# a_1\square a_2 \dots a_n \# \dots \# a_1 \dots a_{n-1}\square a_n \# a_1 \dots a_n \square}  \]
\end{lem}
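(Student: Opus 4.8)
The plan is to realize $\ttsplit_\Sigma$ as an iterator over the Church encoding of the input word. For $w = a_1\dots a_n\in\Sigma^*$ write $\mathrm{spl}(w)$ for the word over $\Gamma$ on the right-hand side of the lemma, that is, the `$\#$'-separated enumeration, ordered by increasing prefix length, of the words $u\square v$ with $uv = w$; it obeys the recursion
\[ \mathrm{spl}(\varepsilon) = \square, \qquad \mathrm{spl}(c\,w) = \square\,c\,w\,\#\;\ttcopy_c\!\bigl(\mathrm{spl}(w)\bigr), \]
where, for a word $s = T_0\#T_1\#\dots\#T_m$ over $\Gamma$ and a letter $c\in\Sigma$, the word $\ttcopy_c(s) = cT_0\#cT_1\#\dots\#cT_m$ is obtained by inserting $c$ before every maximal `$\#$'-free block of $s$.

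First I would implement this auxiliary operation by a \emph{closed} long-safe $\lambda$-term $\ttcopy_c : \Str_{\Gamma}\to\Str_{\Gamma}$: on a Church string $s$, it runs $s$ with the component for `$\#$' replaced by $\lambda z.\,f_{\#}(f_c\,z)$ and applies $f_c$ once more on the outside, written \emph{pointfully} as
\[ \ttcopy_c = \lambda s\,f_{c_1}\dots f_{c_{|\Sigma|}} f_{\square} f_{\#}.\ \lambda z.\ f_c\bigl(s\,f_{c_1}\dots f_{c_{|\Sigma|}}\,f_{\square}\,(\lambda z'.\,f_{\#}(f_c\,z'))\,z\bigr). \]
Phrasing it this way is precisely what makes it safe: the inner abstraction $\lambda z'.\,f_{\#}(f_c\,z')$ is formed in a context holding only the first-order variables $f_{c_1},\dots,f_{\#}$ and not the base variable $z$, so that — deriving every subterm in the smallest context containing its free variables before weakening — all side conditions of the long-safe typing rules are met.

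Next I would take the accumulator type to be a Church-style pair of $\Gamma$-strings, $A_{\ttsplit} = (\Str_{\Gamma}\to\Str_{\Gamma}\to\Str_{\Gamma})\to\Str_{\Gamma}$, and set $\ttsplit_\Sigma = \lambda s.\ \bigl(s\ \alpha_{c_1}\dots\alpha_{c_{|\Sigma|}}\ \mathit{acc}_0\bigr)\,(\lambda p\,q.\,q)$ with $s : \Str_{\Sigma}[A_{\ttsplit}]$, where the value of the iterator after it has consumed a suffix $\tau$ of the input represents the pair $(\overline{\tau},\ \overline{\mathrm{spl}(\tau)})$, the base $\mathit{acc}_0 = \lambda k.\,k\,\overline{\varepsilon}\,\overline{\square}$ represents $(\overline{\varepsilon},\overline{\square})$, and each closed combinator $\alpha_c$ carries out one step of the recursion: it extracts $(p,q)$ and passes on the pair $(\ttconc\,\overline{c}\,p,\ \ttconc\,\overline{\square}\,(\ttconc\,\overline{c}\,(\ttconc\,p\,(\ttconc\,\overline{\#}\,(\ttcopy_c\,q)))))$, using the constant Church strings $\overline{c},\overline{\square},\overline{\#}:\Str_{\Gamma}$ and $\ttconc$ for the alphabet $\Gamma$. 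Correctness — $\ttsplit_\Sigma\,\overline{a_1\dots a_n} =_\beta \overline{\mathrm{spl}(a_1\dots a_n)}$ — then follows by a straightforward induction on $n$ that unfolds the iterator and matches each step against the recursion above; the polynomial-time bound is clear, since the term is a fixed-shape combination of the $O(|\Sigma|)$ gadgets $\alpha_c$, $\ttcopy_c$, $\overline{c}$ together with $\ttconc$, each of size $O(|\Sigma|)$.

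The bulk of the work — and the only genuinely delicate point — is verifying that $\ttsplit_\Sigma$ is homogeneous long-safe. Homogeneity of every type in play is checked directly (for instance $\rmord(\Str_{\Gamma}) = 2$, $\rmord(A_{\ttsplit}) = 4$, and $\Str_{\Sigma}[A_{\ttsplit}]$ is homogeneous by the substitution-of-base remark recalled just before the lemma). For long-safety the organizing observation is that $\ttcopy_c$, $\alpha_c$, $\mathit{acc}_0$, $\ttconc$ and the $\overline{c}$ are all closed, so that the outermost abstraction of each is typed in the empty context, where the side condition $\rmord(\cdot)\leqslant\inford(\varnothing) = +\infty$ holds vacuously; what remains is the same kind of bookkeeping as in the proofs of Lemma~\ref{lem:test-all-strings} and of the concatenation case of Lemma~\ref{lem:star-free-lambda}, namely typing the bodies of these gadgets so that an order-$0$ (base-type) variable is never needlessly kept in the context while a subterm of order $\geqslant 1$ — such as the modified `$\#$'-action inside $\ttcopy_c$ — is being built. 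I expect essentially all of the effort to go into that bookkeeping; everything else is mechanical.
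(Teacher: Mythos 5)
Your construction is correct, but it takes a genuinely different route from the paper's. The paper builds $\ttsplit_\Sigma$ as a left-to-right (CPS) pass: the accumulator is a pair consisting of the prefix read so far and a \emph{string-to-string} template $F_p : \Str_\Gamma\to\Str_\Gamma$ into which the yet-unknown suffix is later plugged at every block; this forces the order-2 prefix component to occur free inside an order-3 argument subterm, and the paper has to pad its type to $\Nat\to\Str_\Gamma$ (a constant function with a dummy argument) to satisfy safety. You instead do a plain right fold whose accumulator is a CPS pair of two \emph{concrete} $\Gamma$-strings $(\overline{\tau},\overline{\mathrm{spl}(\tau)})$, exploiting the recursion $\mathrm{spl}(c\tau)=\square\,c\,\tau\,\#\,\ttcopy_c(\mathrm{spl}(\tau))$ together with the closed term $\ttcopy_c:\Str_\Gamma\to\Str_\Gamma$; I checked that $\ttcopy_c$ and the step combinators are homogeneous long-safe (writing the latter out as $\alpha_c=\lambda a\,k.\;a\,(\lambda p\,q.\;k\;(\ttconc\;\overline{c}\;p)\;(\ttconc\;\overline{\square}\;(\ttconc\;\overline{c}\;(\ttconc\;p\;(\ttconc\;\overline{\#}\;(\ttcopy_c\;q))))))$ makes this immediate), and your derive-in-minimal-context-then-weaken remark for $\ttcopy_c$ is exactly the right mechanism. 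A pleasant consequence you could state more confidently: your version needs \emph{no} analogue of the paper's dummy-argument trick, since the only order-3 argument subterm $\lambda p\,q.\dots$ has only the order-3 continuation $k$ free and all other gadgets are closed; so the safety bookkeeping you defer is actually lighter than in the paper. The trade-offs: your $\ttcopy_c$ re-traverses the accumulated output at every step (harmless for a convertibility reduction), and your accumulator type $(\Str_\Gamma\to\Str_\Gamma\to\Str_\Gamma)\to\Str_\Gamma$ has the same order 4 as the paper's, so the quantitative content is the same; the paper's single left-to-right pass is closer in spirit to the transducer/polyregular reading of $\ttsplit$ discussed in Section~\ref{sec:iatlc}, which is presumably why it was phrased that way. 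The only gaps are presentational: $\alpha_c$ is described rather than written, and the homogeneity/safety verification is promised rather than carried out, but both go through as you expect.
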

\begin{proof}
  As in the case of unions in the proof of Lemma~\ref{lem:star-free-lambda}, the
  idea is to start from a left-to-right procedure using a pair type, and then
  apply a continuation-passing-style transformation. For $p =
  a_1 \dots a_m \in\Sigma^*$ (where $a_i\in\Sigma$), let us consider the maps
  \begin{align*}
    X_p \colon \naturals &\to \Sigma^* & F_{a_1 \dots a_m} \colon \Gamma^* &\to \Gamma^*\\
    n &\mapsto p & y &\mapsto \square a_1 \dots a_m y a_1\square a_2 \dots a_m y  \dots y a_1 \dots a_{m-1} \square a_m y
  \end{align*}
  In particular, $F_\varepsilon(y) = \varepsilon$ and $F_a(y) = \square a y$ for
  $a\in\Sigma$. The maps $X_p$ are constant; the role of their dummy argument is to
  increase the order of an associated variable in the $\lambda$-term defined
  below so that it satisfies the safety condition --
  $\rmord(\Nat\to\Str_\Gamma)=3>2=\rmord(\Str_\Gamma)$.
  
  For $p\in\Sigma^*$ and $c\in\Sigma$, we have the inductive equations
  \[ X_{pc}(n) = X_p(0) \cdot c \qquad F_{pc}(y) = F_p(cy) \cdot X_p(0) \cdot
    \square c \cdot y \]
  which we translate into the $\lambda$-term
  \[ v_c = \lambda kxf.\; k\; (\lambda n.\; \ttconc\; (x\;\overline{0})\; \overline{c})\;
    (\underbrace{\lambda y.\; \ttconc_4\; (f\; (\ttconc\; \overline{c}\; y))\;
      (x\;\overline{0})\;    \overline{\square{}c}\; y}_{\mathclap{\text{$x$
          appears free in this subterm of order 3 $\rightsquigarrow$ the
          $\Nat\to\Str_\Gamma$ trick ensures (long-)safety}\qquad}})
    \quad\text{for}\ c\in\Sigma \]
  with $k : (\Nat\to\Str_\Gamma) \to
  (\Str_\Gamma\to\Str_\Gamma)\to\Str_\Gamma$, $x : \Nat\to\Str_\Gamma$, $f :
  \Str_\Gamma\to\Str_\Gamma$, $n : \Nat$ and $y : \Str_\Gamma$.
  Finally, we take
  \[\ttsplit = \lambda s.\; s\; v_{c_1}\; \dots\;
  v_{c_{|\Sigma|}}\; (\lambda xf.\; \ttconc_3\; (f\;\overline\#)\;
  (x\;\overline{0}) \;\overline\square)\; (\lambda n.\;\emptycode)\; (\lambda
  y.\; \emptycode)\] -- thus, $A_\ttsplit$ is the aforementioned type of $k$.
\end{proof}

\subsection{Proof of Theorem~\ref{thm:safe-hard}}
\label{sec:conclude-proof}

We now have all the ingredients to reduce the \Tower-hard decision problem of
Lemma~\ref{lem:sfempty-hard} to safe $\beta$-convertibility. This reduction is
performed by the following lemma, which therefore immediately entails our main
theorem.
\begin{lem}\label{lem:enum}
  Given a star-free expression $E$ over a finite alphabet $\Sigma$, and $n\in\naturals$ written in unary, one can build in polynomial time a $\lambda$-term $b_E$ such that $\hls b_E : \Bool$ and
  \[ b_E =_\beta \tttrue \quad\iff\quad \denot{E} \cap \Sigma^{\leqslant\tower(n)} \neq \varnothing \]
\end{lem}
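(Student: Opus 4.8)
The plan is to assemble $b_E$ from the building blocks established earlier, using the Church numeral $\overline{\tower(n)}$ to drive an iteration that enumerates all strings over $\Sigma$ up to length $\tower(n)$, then to test $\denot{E}$-membership on that enumeration via $\ttany_\Sigma$ and $t_E$. First I would note that $\tower(n) = 2_n^1$ is astronomically large, but its Church numeral $\overline{\tower(n)}$ can nonetheless be \emph{produced} by a small safe $\lambda$-term: iterate the doubling-of-exponent operation $n$ times starting from $\overline{1}$, i.e.\ build $\lambda$-terms $\tttow_0, \dots, \tttow_n$ with $\tttow_0 =_\beta \overline{1}$ and $\tttow_{k+1} =_\beta \overline{2_{k+1}^1}$, each obtained from the previous by applying an ``exponentiation'' combinator at a suitably type-raised instance of $\Nat$. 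The standard trick here is that on Church numerals, $\lambda m.\; m\; (\lambda n.\; \ttexp\; \overline 2\; n)$ or rather the iterator $\overline n\; \mathtt{double\text{-}exponent}\; \overline 1$ computes a tower; since all of this only uses Church numerals and arithmetic combinators (application, composition), and Church numerals at raised types are homogeneous and safe, the whole tower term is homogeneous long-safe. Because $n$ is given in unary, writing out the $n$-fold iteration explicitly (or using a single Church numeral $\overline n$ as the iterator, which is fine since $n$ is small) takes polynomial time, and the fact that the \emph{type} $A_E$ and the intermediate types may be exponentially large is harmless, exactly as in Lemma~\ref{lem:star-free-lambda}.

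Next I would use the numeral $\overline{\tower(n)}$ to build, by iteration, a Church-encoded string over the alphabet $\Sigma \cup \{\#\}$ that lists every word in $\Sigma^{\leqslant \tower(n)}$ separated by $\#$. The natural construction is: starting from the single-word list $\overline\varepsilon$ (a list containing just the empty string), repeatedly apply the operation ``given a $\#$-separated list $\ell$, return the concatenation $\ell \,\#\, c_1\ell' \,\#\, \dots \,\#\, c_{|\Sigma|}\ell'$'' — but a cleaner formulation is to iterate ``append to the current list, for each letter $c\in\Sigma$, a copy of the list with $c$ prepended to every word'', so that after $\tower(n)$ iterations one has exactly $\Sigma^{\leqslant\tower(n)}$ enumerated. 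This is a string-to-string operation much in the spirit of $\ttsplit_\Sigma$ and $\ttany_\Sigma$, implementable by a left fold / CPS-style term that manipulates an accumulator representing a string over $\Sigma\cup\{\#\}$; the numeral $\overline{\tower(n)}$ simply supplies the iteration count. Call the resulting closed term $\ttenum$, with $\ttenum =_\beta \overline{w_0 \# w_1 \# \dots \# w_N}$ where $\{w_0,\dots,w_N\} = \Sigma^{\leqslant\tower(n)}$; I must check it is homogeneous long-safe, which again will follow because its only subterms are Church encodings, arithmetic/concatenation combinators, and the order-raising ($\Nat\to\Str$) device already used in Lemma~\ref{lem:split}.

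Finally I would set
\[ b_E = \ttany_{\Sigma}\; \ttenum'\; t_E \]
where $\ttenum'$ is $\ttenum$ with its base type $o$ instantiated to $F_{\ttany}(A_E)$ (legitimate by the remark that $\hls t : A$ implies $\hls t : A[B]$ for homogeneous $A,B$), and $t_E$ comes from Lemma~\ref{lem:star-free-lambda}. By Lemma~\ref{lem:test-all-strings}, $b_E =_\beta \tttrue$ iff $\exists i.\; t_E\,\overline{w_i} =_\beta \tttrue$, which by Lemma~\ref{lem:star-free-lambda} holds iff some $w_i \in \denot E$, i.e.\ iff $\denot E \cap \Sigma^{\leqslant\tower(n)} \neq \varnothing$. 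The term $b_E$ is closed of type $\Bool$, homogeneous long-safe (as a safe application of homogeneous long-safe terms at homogeneous types), and built in time polynomial in $|E| + n$ since every piece is. Composing this with Lemma~\ref{lem:sfempty-hard} gives Theorem~\ref{thm:safe-hard}.

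I expect the main obstacle to be bookkeeping around \emph{safety and homogeneity of the tower/enumeration terms}: the naive way to compute a tower of exponentials on Church numerals, or to iterate the string-enumeration step, tends to put a free variable of low order under a binder or in argument position at a higher-order type, violating safety — which is precisely why the plain iteration of exponentiation is delicate, and why the $\Nat\to\Str_\Gamma$ order-raising trick was needed in Lemma~\ref{lem:split}. The careful part of the proof will be to present the tower and enumeration combinators with exactly the right type instantiations (and, where necessary, dummy arguments to raise orders) so that each application falls under the long-safe typing rules with only homogeneous types, while keeping the term size polynomial in the unary-encoded $n$.
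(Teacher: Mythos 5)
Your proposal follows essentially the same route as the paper: build $\overline{\tower(n)}$ by an explicit $n$-fold application (the paper takes $\tttow_n = \overline{2}\;\dots\;\overline{2}$, safe by a remark of Blum and Ong), use it to drive an iterated enumeration term $\ttenum_\Sigma : \Nat[A_{\ttenum}] \to \Str_{\Sigma\cup\{\#\}}$ producing a $\#$-separated list of all words of length at most $\tower(n)$, and set $b_E = \ttany_\Sigma\;(\ttenum_\Sigma\;\tttow_n)\;t_E$. One caveat on your parenthetical alternative: a single Church numeral $\overline{n}$ cannot serve as the iterator of an exponentiation step, because by Schwichtenberg's theorem no simply typed term of type $\Nat \to \Nat$ computes $k \mapsto 2^k$ (each exponentiation forces a shift of the type instance), so the explicit $n$-fold application at successively substituted types --- your primary suggestion, and the paper's --- is the only viable option.
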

\begin{proof}
  The key is to build a $\lambda$-term $\ttenum_\Sigma$ such that:
  \begin{itemize}
    \item $\hls \ttenum_\Sigma : \Nat[A_{\ttenum}] \to \Str_{\Sigma\cup\{\#\}}$ for some type $A_{\ttenum}$;
    \item for any $m\in\naturals$, the application $\ttenum\;\overline{m}$ reduces to the Church encoding of a $\#$-separated list containing all words in $\Sigma^{*}$ of length at most $m$.
  \end{itemize}
  We define it for $\Sigma = \{c_{1},\dots,c_{|\Sigma|}\}$ in such a way that $\ttenum_\Sigma\;\overline{N} =_\beta \overline{f_N(\#)}$ for $N\in\naturals$, where the sequence of functions $f_n\colon (\Sigma\cup\{\#\})^* \to (\Sigma\cup\{\#\})^*$ is inductively defined by
  \[ f_0(x) = x \qquad f_{n+1}(x) = x \cdot f_n(c_1 x) \cdot \ldots \cdot f_n(c_{|\Sigma|} x) \]
  e.g.\ for $\Sigma=\{\mathtt{0,1}\}$ we have $f_1(x) = x\mathtt{0}x\mathtt{1}x$ and $f_2(x) = x \mathtt{0}x\mathtt{00}x\mathtt{10}x \mathtt{1}x\mathtt{01}x\mathtt{11}x$. Observe that the string $f_2(\#)$ is a $\#$-separated representation of the list $[\varepsilon,\mathtt{0,00,10,1,01,11},\varepsilon]$ that indeed contains all words of length at most 2 over $\Sigma$ (with one redundancy).
  To implement this, take $A_{\ttenum} = \Str_{\Sigma} \to \Str_{\Sigma}$ and
  \[ \ttenum_\Sigma = \lambda n.\; n\; (\lambda fs.\;
    \ttconc_{|\Sigma|+1}\; s\; (f\;(\ttconc\;
    \overline{c_{1}}\; s))\; \dots \; (f\; (\ttconc\;
    \overline{c_{|\Sigma|}}\; s)))\; (\lambda y.\; y)\; \overline{\#} \]
  Once we have that, we can take $b_E = \overbrace{\ttany_\Sigma}^{\mathclap{\text{Lemma~\ref{lem:test-all-strings}}}}\;(\ttenum_\Sigma\; \tttow_n)\overbrace{t_E}^{\mathclap{\text{Lemma~\ref{lem:star-free-lambda}}}}$ where
      $\tttow_n = \overbrace{\overline{2}\; \dots\; \overline{2}}^{n\ \text{times}}$ satisfies $\hls \tttow_n : \Nat$ and is $\beta$-convertible to $\overline{\tower(n)}$ (cf.~\cite[Remark~3.5(iii)]{BlumOng}).
\end{proof}

\subsection{Final remarks}
To conclude this paper, we discuss some topics related to Theorem~\ref{thm:safe-hard} and its proof.

\subsubsection{On the complexity of star-free expression equivalence}
\label{sec:dot-depth}

The equivalence problem for usual regular expressions, with Kleene star but no complementation, is \enquote{merely} \PSpace-complete~\cite[Theorem~4.13]{Stockmeyer} -- see also~\cite[Theorem~15]{ComplexityRegularLike}. One could therefore think that the main source of complexity in the case of star-free expressions is \emph{complementation}.

This can be witnessed in the proof idea for Proposition~\ref{prop:stockmeyer}. Translate $E$ to an equivalent nondeterministic finite automaton (NFA), whose number of states bounds the length of a shortest word in the language (since the shortest accepting runs visit each state at most once). This can be done by induction on $E$, using any standard construction on NFA for union and concatenation; the costliest operation is complementation, which uses determinization, inducing a single exponential state blowup.

But in our translation from star-free expressions to $\lambda$-terms (Lemma~\ref{lem:star-free-lambda}), it turns out that complementation is trivial while concatenation increases the order of the types. Relatedly, in \emph{deterministic} finite automata, complementation can be done without changing the set of states, while concatenation may need an exponential blowup of the number of states. So the complexity of the problem is in fact rooted in the \emph{alternation} between complementation and concatenation -- which leads to the \enquote{dot-depth hierarchy}~\cite{DotDepth}.

\subsubsection{Safety in higher-order recursion schemes}
\label{sec:hors}

Blum and Ong's main motivation for introducing the safe $\lambda$-calculus~\cite{BlumOng} was unrelated to complexity: they wanted to transpose to the $\lambda$-calculus the notion of safety on higher-order recursion schemes (HORS)~\cite{SafeHORS} -- and indeed, the correspondence between HORS and $\lambda\mathbf{Y}$-terms (i.e.\ terms in the simply typed \mbox{$\lambda$-calculus} extended with fixed-point combinators $\mathbf{Y}_A : (A \to A) \to A$) relates safe HORS with safe $\lambda\mathbf{Y}$-terms~\cite{SalvatiWalukiewicz}. Even without $\mathbf{Y}$, the safe $\lambda$-calculus has some interesting properties: for example, when substituting a safe $\lambda$-term into another, no spurious variable capture can happen~\cite[Lemma~1.10]{BlumOng} -- this leads to a way to normalize long-safe $\lambda$-terms by rewriting without $\alpha$-renaming, but it is a bit subtle, see~\cite[Section~5.2]{Avoidance}. A universal algebra perspective on the safe $\lambda$-calculus is also sketched in~\cite[\S2.3]{SalvatiHDR}.

The homogeneity condition also comes from the study of
HORS, cf.~\cite{ParysHomogeneity}. In fact, the homogeneous long-safe $\lambda$-calculus is equivalent (cf.~\cite[Remark~3.53]{BlumPhD}) to an earlier attempt~\cite[Section~2.4.2]{MirandaHOMC} to design a safe $\lambda$-calculus inspired by HORS.

\subsubsection{Implicit automata in typed $\lambda$-calculi and transducer theory}
\label{sec:iatlc}

Lemma~\ref{lem:star-free-lambda} tells us that every language of the form $\denot{E}$ for some star-free expression can be expressed by some homogeneous long-safe $\lambda$-term of type $\Str_\Sigma[A] \to \Bool$. These \emph{star-free languages} are a fundamental and well-studied (cf.~\cite{StraubingSIGLOG}) subclass of \emph{regular languages}. Hillebrand and Kanellakis have shown that the languages computed by simply typed $\lambda$-terms of type $\Str_\Sigma[A] \to \Bool$ are, in fact, exactly the regular languages~\cite[Theorem~3.4]{HillebrandKanellakis}\footnote{This paper~\cite{HillebrandKanellakis} is the same that was already mentioned in \S\ref{sec:stlc-order}.}; they have a translation of deterministic finite automata (rather than regular expressions) into the simply typed $\lambda$-calculus that actually produces safe $\lambda$-terms.
In fact, we used this translation of DFA in the proof of Lemma~\ref{lem:star-free-lambda}, for the \enquote{single letter} case of the induction.

By using an affine and non-commutative type system, one can get a variant of Hillebrand and Kanellakis's theorem characterizing star-free languages instead~\cite{aperiodic}.\footnote{See also~\cite[Chapter~7]{titoPhD} for a variant of this result (with linear instead of affine types).} The proof does not translate star-free expressions; instead, to encode star-free languages, it goes through an algebraic characterization. Nevertheless, these works were the main inspirations for our proof strategy for Theorem~\ref{thm:safe-hard}.

Further works~\cite{titoPhD,pradic2024implicit} in this \enquote{implicit automata in typed $\lambda$-calculi} research programme have focused on characterizations of string-to-string (or tree-to-tree) functions computed by \emph{transducers}, i.e.\ automata with output. For reasons that are close to the aforementioned connection with HORS, as discussed in~\cite[\S1.4.1]{titoPhD}, definability in the safe $\lambda$-calculus characterizes an important class of functions from 1980s transducer theory. The functions defined by the safe $\lambda$-terms $\ttsplit_{\Sigma}$ and $\ttenum_{\Sigma}$ (from Lemmas~\ref{lem:split} and~\ref{lem:enum} respectively) therefore belong to this class. The \enquote{split} function is also a typical example of the more recently introduced \emph{polyregular functions}, cf.~\cite{PolyregSurvey,Kiefer24} -- in fact, it is a slight variation on the \enquote{squaring with underlining} function of~\cite[Example~3]{PolyregSurvey}.

\bibliographystyle{alphaurl}
\bibliography{safe-complexity}

\begin{thebibliography}{CHHM22}

\bibitem[AKST19]{AsadaKST19}
Kazuyuki Asada, Naoki Kobayashi, Ryoma Sin'ya, and Takeshi Tsukada.
\newblock Almost every simply typed lambda-term has a long beta-reduction sequence.
\newblock {\em Logical Methods in Computer Science}, 15(1), 2019.
\newblock \href {https://doi.org/10.23638/LMCS-15(1:16)2019} {\path{doi:10.23638/LMCS-15(1:16)2019}}.

\bibitem[AL13]{AspertiLevy}
Andrea Asperti and Jean{-}Jacques Lévy.
\newblock The cost of usage in the lambda-calculus.
\newblock In {\em 28th Annual {ACM/IEEE} Symposium on Logic in Computer Science, {LICS} 2013, New Orleans, LA, USA, June 25-28, 2013}, pages 293--300. {IEEE} Computer Society, 2013.
\newblock \href {https://doi.org/10.1109/LICS.2013.35} {\path{doi:10.1109/LICS.2013.35}}.

\bibitem[Asp17]{Asperti17a}
Andrea Asperti.
\newblock About the efficient reduction of lambda terms.
\newblock {\em CoRR}, abs/1701.04240, 2017.
\newblock \href {https://arxiv.org/abs/1701.04240} {\path{arXiv:1701.04240}}.

\bibitem[BB85]{BohmBerarducci}
Corrado Böhm and Alessandro Berarducci.
\newblock Automatic synthesis of typed $\lambda$-programs on term algebras.
\newblock {\em Theoretical Computer Science}, 39:135--154, January 1985.
\newblock \href {https://doi.org/10.1016/0304-3975(85)90135-5} {\path{doi:10.1016/0304-3975(85)90135-5}}.

\bibitem[Bec01]{Beckmann}
Arnold Beckmann.
\newblock Exact bounds for lengths of reductions in typed lambda-calculus.
\newblock {\em Journal of Symbolic Logic}, 66(3):1277--1285, 2001.
\newblock \href {https://doi.org/10.2307/2695106} {\path{doi:10.2307/2695106}}.

\bibitem[Blu09]{BlumPhD}
William Blum.
\newblock {\em The safe lambda calculus}.
\newblock PhD thesis, University of Oxford, 2009.
\newblock URL: \url{https://ethos.bl.uk/OrderDetails.do?uin=uk.bl.ethos.504329}.

\bibitem[BN98]{TRAAT}
Franz Baader and Tobias Nipkow.
\newblock {\em Term Rewriting and All That}.
\newblock Cambridge University Press, 1998.
\newblock \href {https://doi.org/10.1017/CBO9781139172752} {\path{doi:10.1017/CBO9781139172752}}.

\bibitem[BO09]{BlumOng}
William Blum and C.-H.~Luke Ong.
\newblock The {Safe} {Lambda} {Calculus}.
\newblock {\em Logical Methods in Computer Science}, 5(1), February 2009.
\newblock \href {https://doi.org/10.2168/LMCS-5(1:3)2009} {\path{doi:10.2168/LMCS-5(1:3)2009}}.

\bibitem[Boj22]{PolyregSurvey}
Mikołaj Bojańczyk.
\newblock Transducers of polynomial growth.
\newblock In Christel Baier and Dana Fisman, editors, {\em {LICS} '22: 37th Annual {ACM/IEEE} Symposium on Logic in Computer Science, Haifa, Israel, August 2 - 5, 2022}, pages 1:1--1:27. {ACM}, 2022.
\newblock \href {https://doi.org/10.1145/3531130.3533326} {\path{doi:10.1145/3531130.3533326}}.

\bibitem[BS23]{BarenbaumSottile}
Pablo Barenbaum and Cristian Sottile.
\newblock Two decreasing measures for simply typed {\(\lambda\)}-terms.
\newblock In Marco Gaboardi and Femke van Raamsdonk, editors, {\em 8th International Conference on Formal Structures for Computation and Deduction, {FSCD} 2023, July 3-6, 2023, Rome, Italy}, volume 260 of {\em LIPIcs}, pages 11:1--11:19. Schloss Dagstuhl - Leibniz-Zentrum für Informatik, 2023.
\newblock \href {https://doi.org/10.4230/LIPIcs.FSCD.2023.11} {\path{doi:10.4230/LIPIcs.FSCD.2023.11}}.

\bibitem[CHHM22]{HOSAT}
Dmitry Chistikov, Christoph Haase, Zahra Hadizadeh, and Alessio Mansutti.
\newblock {Higher-Order Quantified Boolean Satisfiability}.
\newblock In Stefan Szeider, Robert Ganian, and Alexandra Silva, editors, {\em 47th International Symposium on Mathematical Foundations of Computer Science (MFCS 2022)}, volume 241 of {\em Leibniz International Proceedings in Informatics (LIPIcs)}, pages 33:1--33:15, Dagstuhl, Germany, 2022. Schloss Dagstuhl -- Leibniz-Zentrum f{\"u}r Informatik.
\newblock \href {https://doi.org/10.4230/LIPIcs.MFCS.2022.33} {\path{doi:10.4230/LIPIcs.MFCS.2022.33}}.

\bibitem[Con20]{Condoluci}
Andrea Condoluci.
\newblock {\em Beta-Conversion, Efficiently}.
\newblock PhD thesis, Alma Mater Studiorum -- Università di Bologna, April 2020.
\newblock URL: \url{http://amsdottorato.unibo.it/9444/}.

\bibitem[dM06]{MirandaHOMC}
Jolie~G. de~Miranda.
\newblock {\em Structures generated by higher-order grammars and the safety constraint}.
\newblock PhD thesis, University of Oxford, 2006.
\newblock URL: \url{http://ethos.bl.uk/OrderDetails.do?uin=uk.bl.ethos.442397}.

\bibitem[FMvO23]{Avoidance}
Samuel Frontull, Georg Moser, and Vincent van Oostrom.
\newblock {$\alpha$-Avoidance}.
\newblock In Marco Gaboardi and Femke van Raamsdonk, editors, {\em 8th International Conference on Formal Structures for Computation and Deduction (FSCD 2023)}, volume 260 of {\em Leibniz International Proceedings in Informatics (LIPIcs)}, pages 22:1--22:22, Dagstuhl, Germany, 2023. Schloss Dagstuhl -- Leibniz-Zentrum f{\"u}r Informatik.
\newblock \href {https://doi.org/10.4230/LIPIcs.FSCD.2023.22} {\path{doi:10.4230/LIPIcs.FSCD.2023.22}}.

\bibitem[Geu92]{Geuvers}
Herman Geuvers.
\newblock The {Church-Rosser} property for beta-eta-reduction in typed lambda-calculi.
\newblock In {\em Proceedings of the Seventh Annual Symposium on Logic in Computer Science {(LICS} '92), Santa Cruz, California, USA, June 22-25, 1992}, pages 453--460. {IEEE} Computer Society, 1992.
\newblock \href {https://doi.org/10.1109/LICS.1992.185556} {\path{doi:10.1109/LICS.1992.185556}}.

\bibitem[GRV09]{MELLbyLevel}
Marco Gaboardi, Luca Roversi, and Luca Vercelli.
\newblock A {By}-{Level} {Analysis} of {Multiplicative} {Exponential} {Linear} {Logic}.
\newblock In {\em Mathematical {Foundations} of {Computer} {Science} 2009}, Lecture {Notes} in {Computer} {Science}, pages 344--355. Springer, Berlin, Heidelberg, August 2009.
\newblock \href {https://doi.org/10.1007/978-3-642-03816-7_30} {\path{doi:10.1007/978-3-642-03816-7_30}}.

\bibitem[HK96]{HillebrandKanellakis}
Gerd~G. Hillebrand and Paris~C. Kanellakis.
\newblock On the {Expressive} {Power} of {Simply} {Typed} and {Let}-{Polymorphic} {Lambda} {Calculi}.
\newblock In {\em Proceedings of the 11th {Annual} {IEEE} {Symposium} on {Logic} in {Computer} {Science}}, pages 253--263. IEEE Computer Society, 1996.
\newblock \href {https://doi.org/10.1109/LICS.1996.561337} {\path{doi:10.1109/LICS.1996.561337}}.

\bibitem[HK11]{ComplexityRegularLike}
Markus Holzer and Martin Kutrib.
\newblock The complexity of regular(-like) expressions.
\newblock {\em International Journal of Foundations of Computer Science}, 22(7):1533--1548, 2011.
\newblock \href {https://doi.org/10.1142/S0129054111008866} {\path{doi:10.1142/S0129054111008866}}.

\bibitem[HKM96]{HKMairson}
Gerd~G. Hillebrand, Paris~C. Kanellakis, and Harry~G. Mairson.
\newblock Database {Query} {Languages} {Embedded} in the {Typed} {Lambda} {Calculus}.
\newblock {\em Information and Computation}, 127(2):117--144, June 1996.
\newblock \href {https://doi.org/10.1006/inco.1996.0055} {\path{doi:10.1006/inco.1996.0055}}.

\bibitem[JT18]{TraceF}
Guilhem Jaber and Nikos Tzevelekos.
\newblock A trace semantics for system {F} parametric polymorphism.
\newblock In Christel Baier and Ugo~Dal Lago, editors, {\em Foundations of Software Science and Computation Structures - 21st International Conference, {FOSSACS} 2018, Held as Part of the European Joint Conferences on Theory and Practice of Software, {ETAPS} 2018, Thessaloniki, Greece, April 14-20, 2018, Proceedings}, volume 10803 of {\em Lecture Notes in Computer Science}, pages 20--38. Springer, 2018.
\newblock \href {https://doi.org/10.1007/978-3-319-89366-2_2} {\path{doi:10.1007/978-3-319-89366-2_2}}.

\bibitem[Kie24]{Kiefer24}
Sandra Kiefer.
\newblock Polyregular functions -- characterisations and refutations.
\newblock In Joel~D. Day and Florin Manea, editors, {\em Developments in Language Theory - 28th International Conference, {DLT} 2024, G{\"{o}}ttingen, Germany, August 12-16, 2024, Proceedings}, volume 14791 of {\em Lecture Notes in Computer Science}, pages 13--21. Springer, 2024.
\newblock \href {https://doi.org/10.1007/978-3-031-66159-4_2} {\path{doi:10.1007/978-3-031-66159-4_2}}.

\bibitem[KNU02]{SafeHORS}
Teodor Knapik, Damian Niwiński, and Paweł Urzyczyn.
\newblock Higher-order pushdown trees are easy.
\newblock In Mogens Nielsen and Uffe Engberg, editors, {\em Foundations of Software Science and Computation Structures, 5th International Conference, {FOSSACS} 2002. Held as Part of the Joint European Conferences on Theory and Practice of Software, {ETAPS} 2002 Grenoble, France, April 8-12, 2002, Proceedings}, volume 2303 of {\em Lecture Notes in Computer Science}, pages 205--222. Springer, 2002.
\newblock \href {https://doi.org/10.1007/3-540-45931-6_15} {\path{doi:10.1007/3-540-45931-6_15}}.

\bibitem[KSW16]{ChurchvsCurry}
Fairouz Kamareddine, Jonathan~P. Seldin, and J.~B. Wells.
\newblock Bridging {Curry} and {Church's} typing style.
\newblock {\em Journal of Applied Logic}, 18:42--70, 2016.
\newblock \href {https://doi.org/10.1016/j.jal.2016.05.008} {\path{doi:10.1016/j.jal.2016.05.008}}.

\bibitem[Mai92]{MairsonSimple}
Harry~G. Mairson.
\newblock A simple proof of a theorem of {Statman}.
\newblock {\em Theoretical Computer Science}, 103(2):387--394, September 1992.
\newblock \href {https://doi.org/10.1016/0304-3975(92)90020-G} {\path{doi:10.1016/0304-3975(92)90020-G}}.

\bibitem[MS15]{gascheIntermediate}
Guillaume Munch{-}Maccagnoni and Gabriel Scherer.
\newblock Polarised intermediate representation of lambda calculus with sums.
\newblock In {\em 30th Annual {ACM/IEEE} Symposium on Logic in Computer Science, {LICS} 2015, Kyoto, Japan, July 6-10, 2015}, pages 127--140. {IEEE} Computer Society, 2015.
\newblock \href {https://doi.org/10.1109/LICS.2015.22} {\path{doi:10.1109/LICS.2015.22}}.

\bibitem[Ngu21]{titoPhD}
Lê Thành~Dũng Nguy{\~{\^e}}n.
\newblock {\em Implicit automata in linear logic and categorical transducer theory}.
\newblock PhD thesis, Université Paris XIII (Sorbonne Paris Nord), December 2021.
\newblock URL: \url{https://theses.hal.science/tel-04132636}.

\bibitem[NP20]{aperiodic}
Lê Thành D\~ung N{guy\~{ê}n} and Cécilia Pradic.
\newblock Implicit automata in typed $\lambda$-calculi {I}: aperiodicity in a non-commutative logic.
\newblock In Artur Czumaj, Anuj Dawar, and Emanuela Merelli, editors, {\em 47th International Colloquium on Automata, Languages, and Programming, {ICALP} 2020, July 8-11, 2020, Saarbr{\"{u}}cken, Germany (Virtual Conference)}, volume 168 of {\em LIPIcs}, pages 135:1--135:20. Schloss Dagstuhl - Leibniz-Zentrum für Informatik, 2020.
\newblock \href {https://doi.org/10.4230/LIPIcs.ICALP.2020.135} {\path{doi:10.4230/LIPIcs.ICALP.2020.135}}.

\bibitem[PA93]{PlotkinAbadi}
Gordon~D. Plotkin and Martín Abadi.
\newblock A logic for parametric polymorphism.
\newblock In Marc Bezem and Jan~Friso Groote, editors, {\em Typed Lambda Calculi and Applications, International Conference on Typed Lambda Calculi and Applications, {TLCA} '93, Utrecht, The Netherlands, March 16-18, 1993, Proceedings}, volume 664 of {\em Lecture Notes in Computer Science}, pages 361--375. Springer, 1993.
\newblock \href {https://doi.org/10.1007/BFb0037118} {\path{doi:10.1007/BFb0037118}}.

\bibitem[Par18]{ParysHomogeneity}
Paweł Parys.
\newblock Homogeneity without loss of generality.
\newblock In Hélène Kirchner, editor, {\em 3rd International Conference on Formal Structures for Computation and Deduction, {FSCD} 2018, July 9-12, 2018, Oxford, {UK}}, volume 108 of {\em LIPIcs}, pages 27:1--27:15. Schloss Dagstuhl - Leibniz-Zentrum für Informatik, 2018.
\newblock \href {https://doi.org/10.4230/LIPIcs.FSCD.2018.27} {\path{doi:10.4230/LIPIcs.FSCD.2018.27}}.

\bibitem[Pin17]{DotDepth}
Jean{-}Éric Pin.
\newblock The dot-depth hierarchy, 45 years later.
\newblock In Stavros Konstantinidis, Nelma Moreira, Rogério Reis, and Jeffrey~O. Shallit, editors, {\em The Role of Theory in Computer Science -- Essays Dedicated to Janusz Brzozowski}, pages 177--202. World Scientific, 2017.
\newblock \href {https://doi.org/10.1142/9789813148208_0008} {\path{doi:10.1142/9789813148208_0008}}.

\bibitem[PP24]{pradic2024implicit}
Cécilia Pradic and Ian Price.
\newblock Implicit automata in $\lambda$-calculi {III}: affine planar string-to-string functions, 2024.
\newblock \href {https://arxiv.org/abs/2404.03985} {\path{arXiv:2404.03985}}.

\bibitem[Sal15]{SalvatiHDR}
Sylvain Salvati.
\newblock {\em Lambda-calculus and formal language theory}.
\newblock Habilitation à diriger des recherches, {Université de Bordeaux}, December 2015.
\newblock URL: \url{https://theses.hal.science/tel-01253426}.

\bibitem[Sch91]{Schwichtenberg}
Helmut Schwichtenberg.
\newblock An upper bound for reduction sequences in the typed {\(\lambda\)}-calculus.
\newblock {\em Archive for Mathematical Logic}, 30(5-6):405--408, 1991.
\newblock \href {https://doi.org/10.1007/BF01621476} {\path{doi:10.1007/BF01621476}}.

\bibitem[Sch16]{Schmitz}
Sylvain Schmitz.
\newblock Complexity hierarchies beyond elementary.
\newblock {\em {ACM} Transactions on Computation Theory}, 8(1):3:1--3:36, 2016.
\newblock \href {https://doi.org/10.1145/2858784} {\path{doi:10.1145/2858784}}.

\bibitem[Sch17]{gascheEmpty}
Gabriel Scherer.
\newblock Deciding equivalence with sums and the empty type.
\newblock In Giuseppe Castagna and Andrew~D. Gordon, editors, {\em Proceedings of the 44th {ACM} {SIGPLAN} Symposium on Principles of Programming Languages, {POPL} 2017, Paris, France, January 18-20, 2017}, pages 374--386. {ACM}, 2017.
\newblock \href {https://doi.org/10.1145/3009837.3009901} {\path{doi:10.1145/3009837.3009901}}.

\bibitem[Sin22]{Singh}
Alexandros Singh.
\newblock {\em {A} unified approach to the combinatorics of the {\(\lambda\)}-calculus and maps: bijections and limit properties}.
\newblock PhD thesis, Université Paris XIII (Sorbonne Paris Nord), 2022.
\newblock URL: \url{https://theses.hal.science/tel-04069290}.

\bibitem[Sta79]{Statman}
Richard Statman.
\newblock The typed $\lambda$-calculus is not elementary recursive.
\newblock {\em Theoretical Computer Science}, 9(1):73--81, July 1979.
\newblock \href {https://doi.org/10.1016/0304-3975(79)90007-0} {\path{doi:10.1016/0304-3975(79)90007-0}}.

\bibitem[Sto74]{Stockmeyer}
Larry~J. Stockmeyer.
\newblock {\em The complexity of decision problems in automata theory and logic}.
\newblock PhD thesis, Massachusetts Institute of Technology, 1974.
\newblock URL: \url{http://hdl.handle.net/1721.1/15540}.

\bibitem[Str18]{StraubingSIGLOG}
Howard Straubing.
\newblock First-order logic and aperiodic languages: a revisionist history.
\newblock {\em {ACM} {SIGLOG} News}, 5(3):4--20, 2018.
\newblock \href {https://doi.org/10.1145/3242953.3242956} {\path{doi:10.1145/3242953.3242956}}.

\bibitem[SW16]{SalvatiWalukiewicz}
Sylvain Salvati and Igor Walukiewicz.
\newblock Simply typed fixpoint calculus and collapsible pushdown automata.
\newblock {\em Mathematical Structures in Computer Science}, 26(7):1304--1350, October 2016.
\newblock \href {https://doi.org/10.1017/S0960129514000590} {\path{doi:10.1017/S0960129514000590}}.

\bibitem[Tak95]{Takahashi}
Masako Takahashi.
\newblock Parallel reductions in lambda-calculus.
\newblock {\em Information and Computation}, 118(1):120--127, 1995.
\newblock \href {https://doi.org/10.1006/inco.1995.1057} {\path{doi:10.1006/inco.1995.1057}}.

\bibitem[Ter12]{Terui}
Kazushige Terui.
\newblock Semantic {Evaluation}, {Intersection} {Types} and {Complexity} of {Simply} {Typed} {Lambda} {Calculus}.
\newblock In {\em 23rd {International} {Conference} on {Rewriting} {Techniques} and {Applications} ({RTA}'12)}, pages 323--338, 2012.
\newblock \href {https://doi.org/10.4230/LIPIcs.RTA.2012.323} {\path{doi:10.4230/LIPIcs.RTA.2012.323}}.

\bibitem[Wan87]{WandInference}
Mitchell Wand.
\newblock A simple algorithm and proof for type inference.
\newblock {\em Fundamenta Informaticae}, 10(2):115--121, 1987.
\newblock \href {https://doi.org/10.3233/FI-1987-10202} {\path{doi:10.3233/FI-1987-10202}}.

\end{thebibliography}

\end{document}